\documentclass[11pt]{article}
\pdfoutput=1
\usepackage[activate={true,nocompatibility},final,kerning=true]{microtype}
\usepackage{amsmath,amssymb,amsfonts,amsthm}
\usepackage{graphicx}
\usepackage{fullpage}
\usepackage[x11names]{xcolor}
\usepackage[pdfusetitle,backref,colorlinks,citecolor=RoyalBlue4, linkcolor=RoyalBlue4]{hyperref}
\usepackage{cleveref} 
\usepackage[english]{babel}
\addto\extrasenglish{%
}

\usepackage{color}
\usepackage{wrapfig}
\usepackage{tikz}
\usetikzlibrary{decorations.pathreplacing}
\usepackage{setspace}
\usepackage{algorithm}
\usepackage[noend]{algpseudocode}
\usepackage[framemethod=tikz]{mdframed}
\usepackage{xspace}
\usepackage{pgfplots}
\usepackage{framed}
\usepackage{subcaption}
\usepackage{thmtools}
\usepackage{thm-restate}
\pgfplotsset{compat=1.5}

\newtheorem{theorem}{Theorem}[section]
\newtheorem{corollary}[theorem]{Corollary}
\newtheorem{lemma}[theorem]{Lemma}

\newtheorem{definition}[theorem]{Definition}

\numberwithin{equation}{section}

\newenvironment{proofof}[1]{\begin{trivlist} \item {\bf Proof
#1:~~}}
  {\qed\end{trivlist}}

\newcommand{\namedref}[2]{\hyperref[#2]{#1~\ref*{#2}}}

\newcommand{\alglab}[1]{\label{alg:#1}}
\renewcommand{\algref}[1]{\namedref{Algorithm}{alg:#1}}

\def \YES    {\mdef{\mathsf{YES}}}
\def \NO    {\mdef{\mathsf{NO}}}
\def \IC    {\mdef{\mathsf{IC}}}
\def \biasdetect    {\mdef{\textsc{BiasDetect}}}
\def \privmed    {\mdef{\textsc{PrivMed}}}



\newcommand{\diffdist}{\textsc{DiffDist}}

\newcommand{\discpred}{{\textsc{DiscPred}}}

\newcommand{\PPr}[1]{\ensuremath{\mathbf{Pr}\left[#1\right]}}
\newcommand{\PPPr}[2]{\ensuremath{\underset{#1}{\mathbf{Pr}}\left[#2\right]}}

\newcommand{\EEx}[2]{\ensuremath{\underset{#1}{\mathbb{E}}\left[#2\right]}}
\renewcommand{\O}[1]{\ensuremath{O\left(#1\right)}}
\newcommand{\tO}[1]{\ensuremath{\widetilde{O}\left(#1\right)}}
\newcommand{\eps}{\varepsilon}

\newcommand{\shortdash}{\text{-}}
\def \TV {\text{TV}}

\def \calA    {\mdef{\mathcal{A}}}

\def \calD    {\mdef{\mathcal{D}}}

\def \calX    {\mdef{\mathcal{X}}}
\def \calY    {\mdef{\mathcal{Y}}}

\newcommand{\mdef}[1]{{\ensuremath{#1}}\xspace}  

\DeclareMathOperator*{\poly}{poly}



\newcommand{\E}[2][]{\mdef{\underset{#1}{\mathbb{E}}\left[#2\right]}} 

\newcommand{\ignore}[1]{}

\newif\ifnotes\notestrue 
\ifnotes
\newcommand{\samson}[1]{\textcolor{purple}{{\bf (Samson:} {#1}{\bf ) }} \marginpar{\tiny\bf
             \begin{minipage}[t]{0.5in}
               \raggedright S:
            \end{minipage}}}            							
\else
\newcommand{\samson}[1]{}
\fi

\newif\ifnotes\notestrue 
\ifnotes
\newcommand{\fred}[1]{\textcolor{purple}{{\bf (Fred:} {#1}{\bf ) }} \marginpar{\tiny\bf
             \begin{minipage}[t]{0.5in}
               \raggedright S:
            \end{minipage}}}            							
\else
\newcommand{\fred}[1]{}
\fi

\providecommand{\email}[1]{\href{mailto:#1}{\nolinkurl{#1}\xspace}}

\begin{document}

\title{Streaming Algorithms for Learning with Experts: Deterministic Versus Robust}
\author{David P. Woodruff \\  CMU \\ \texttt{ dwoodruf@cs.cmu.edu}  \and Fred Zhang\thanks{Work done in part while interning at Google.} \\ UC Berkeley \\ \texttt{ z0@berkeley.edu} \\\and Samson Zhou\thanks{Work done in part while at Carnegie Mellon University.} \\   UC Berkeley and Rice University \\ \texttt{samsonzhou@gmail.com}}
\date{}
\maketitle

\begin{abstract}
In the online learning with experts problem, an algorithm must make a prediction about an outcome on each of $T$ days (or times), given a set of $n$ experts who make predictions on each day (or time). The algorithm is given feedback on the outcomes of each day, including the cost of its prediction and the cost of the expert predictions, and the goal is to make a prediction with the minimum cost, specifically compared to the best expert in the set. Recent work by Srinivas, Woodruff, Xu, and Zhou (STOC 2022) introduced the study of the online learning with experts problem under memory constraints. 

However, often the predictions made by experts or algorithms at some time influence future outcomes, so that the input is adaptively chosen. Whereas deterministic algorithms would be robust to adaptive inputs, existing algorithms all crucially use randomization to sample a small number of experts. 

In this paper, we study deterministic and robust algorithms for the experts problem. We first show a space lower bound of $\widetilde{\Omega}\left(\frac{nM}{RT}\right)$ for any deterministic algorithm that achieves regret $R$ when the best expert makes $M$ mistakes. Our result shows that the natural deterministic algorithm, which iterates through pools of experts until each expert in the pool has erred, is optimal up to polylogarithmic factors. On the positive side, we give a randomized algorithm that is robust to adaptive inputs that uses $\widetilde{O}\left(\frac{n}{R\sqrt{T}}\right)$ space for  $M=O\left(\frac{R^2 T}{\log^2 n}\right)$, thereby showing a smooth space-regret trade-off.
\end{abstract}

\newpage

\section{Introduction}
Online learning with experts is a problem of sequential prediction. 
On each of $T$ days (or times), an algorithm must make a prediction about an outcome, given a set of $n$ experts who make predictions on the outcome. 
The algorithm is then given feedback on the cost of its prediction and on the expert predictions for the current day. 
In the \emph{discrete prediction with experts problem}, the set of possible predictions is restricted to a finite set, and the cost is 0 if the prediction is correct, and 1 otherwise. 
More generally, the set of possible predictions need not be restricted but we assume the costs are restricted to be in a range $[0,\rho]$ for some fixed parameter $\rho>0$, with lower costs indicating better performances of the algorithm or experts. 
This process continues for the $T$ days (or times), after which the performance of the algorithm is compared to the performance of the best performing expert. 
More formally, the goal for the online learning with experts problem is often quantified by achieving the best regret, which is the difference between the total cost of the algorithm and the total cost of the best performing expert, i.e., the expert that incurs the least overall cost, amortized over the total number of days. 

A well-known folklore algorithm for handling the discrete prediction with experts problem is the weighted majority algorithm~\cite{LittlestoneW94}. 
The deterministic variant of the weighted majority algorithm simply initializes ``weights'' for all experts to $1$, downweights any incorrect expert on a given day, and selects the prediction supported by the largest weight of experts. 
The algorithm solves the discrete prediction with experts problem with $\O{M+\log n}$ total mistakes, where $M$ is the number of mistakes made by the best expert and the coefficient of $M$ hidden by the big Oh notation is approximately $2.41$, thus achieving regret $\O{M+\log n}$. 
More generally, a large body of literature has studied optimizations to the weighted majority algorithm, such as a randomized variant where the probability of the algorithm selecting each prediction is proportional to the sum of the weights of the experts supporting the prediction. 
The randomized weighted majority algorithm achieves regret $\O{\sqrt{{\log n}/{T}}}$~\cite{LittlestoneW94}, which has been shown to be information-theoretically optimal, up to a constant. 
There have subsequently been many follow-ups to the weighted and randomized weighted majority algorithms that achieve similar regret bounds, but improve in other areas. 
For example, on a variety of structured problems, such as online shortest paths, follow the perturbed leader~\cite{KalaiV05} achieves the same regret bound as  randomized weighted majority  but uses less runtime on each day (or time).
In addition, the multiplicative weights algorithm achieves the optimal $\sqrt{{\ln n}/{(2T)}}$ regret, with a tight leading constant~\cite{GravinPS17}. 
However, these classic algorithms use a framework that maintains the cumulative cost of each expert, which requires the algorithm to store $\Omega(n)$ bits of information across its runtime. 

\paragraph{Memory bounds.} 
Recently, \cite{SrinivasWXZ22} considered the online learning with experts problem when memory is a premium for the algorithm. 
On the hardness side, they showed that any algorithm achieving a target regret $R$ requires $\Omega\left(\frac{n}{R^2T}\right)$ space, which implies that any algorithm achieving the information-theoretic $\O{\sqrt{{\log n}/{T}}}$ regret must use near-linear space. 
On the other hand, for random-order streams in which the algorithm may receive the worst-case input, but then the order of the days is uniformly random, \cite{SrinivasWXZ22} gave a nearly matching randomized algorithm that uses $\tO{\frac{n}{R^2T}}$ space and $R=\Omega\left(\sqrt{\frac{\log^2 n}{T}}\right)$, i.e., nearly all values of regret down to the information-theoretic limit. 
Moreover, when the number of mistakes $M$ made by the best expert is small, i.e., $M=\O{R^2T}$, \cite{SrinivasWXZ22} gave a randomized algorithm that uses $\tO{\frac{n}{RT}}$ space for arbitrary-order streams, thus showing that the hardness of their lower bound originates from a setting where the best expert makes a large number of mistakes. 

Subsequently, \cite{PengZ22} considered the online learning with experts problem  when  the algorithm is limited to use memory sublinear in $n$. 
They introduced a general framework that achieves $o(T)$ regret using $o(n)$ memory, with a trade-off parameter between space and regret that obtains $O_n\left(T^{4/5}\right)$ regret with $\O{\sqrt{n}}$ space and $O_n\left(T^{0.67}\right)$ regret with $\O{n^{0.99}}$ space. 

\paragraph{Adaptive inputs and determinism.}
Up to now, the discussion has focused on an oblivious setting, where the input to the algorithm may be worst-case, but is chosen independently of the algorithm and its outputs. 
The online learning with experts problem is often considered in the adaptive setting, where the input to the algorithm is allowed to depend on previous outputs by the algorithm. 
Formally, we define the adaptive setting as a two-player game between an algorithm $\calD$ and an adversary $\calA$ that adaptively creates the input stream to $\calD$. 
The game then proceeds in days and on the $t$-th day:
\begin{enumerate}
\item
The adversary $\calA$ chooses the outputs of all experts on day $t$ as well as the outcome of day $t$, depending on all previous stream updates and all previous outputs from the algorithm $\calD$. 
\item
The outputs (i.e., predictions) of all experts are simultaneously given to the algorithm $\calD$, which updates its data structures, acquires a fresh batch $R_t$ of random bits, and outputs a predicted outcome for day $t$.
\item
The outcome of day $t$ is revealed to $\calD$, while the predicted outcome for day $t$ by $\calD$ is revealed to the adversary $\calA$.
\end{enumerate}
The goal of $\calA$ is to induce $\calD$ to make as many incorrect predictions as possible throughout the stream. 
It is clear that any deterministic algorithm for the online learning with experts problem will maintain the same guarantees in the adaptive model. 
Unfortunately, both the algorithms of \cite{SrinivasWXZ22} and \cite{PengZ22} are randomized procedures that rely on iteratively sampling ``pools'' of experts, which can potentially be exploited by an adaptive adversary who learns the experts sampled in each pool. 
Interestingly, both the randomized weighted majority algorithm~\cite{LittlestoneW94} and the multiplicative weights algorithm~\cite{GravinPS17} are known to be robust to adaptive inputs. 

\subsection{Our Contributions}
In this paper, we study the capabilities and limits of sublinear space algorithms for the online learning with experts problem on adaptive inputs. 

\paragraph{Tight bounds for deterministic algorithms.}
First, we  provide a simple deterministic algorithm that uses space $\tO{\frac{nM}{RT}}$. 
Consider an algorithm that iteratively selects the next pool of $k=\tO{\frac{nM}{RT}}$ experts and running the deterministic majority algorithm on the experts in the pool, while removing any incorrect experts from the pool until the pool is completely depleted, at which point the next pool of $\tO{\frac{nM}{RT}}$ experts is selected. 
The main intuition is that each pool can incur at most $\O{\log n}$ mistakes before it is completely depleted and the best expert can only make $M$ mistakes. 
By the time the pool has cycled through $nM$ experts, i.e., $M$ times for each of the $n$ experts, then the best expert no longer makes any mistakes and will be retained by the pool. 
Thus, the total number of mistakes made by the deterministic algorithm is $\frac{nM}{k}\cdot\O{\log n}$. 
Hence, for a target average regret $R$, the total number of mistakes by the algorithm must be at most $M+RT\ge RT$, so it suffices to set $k=\tO{\frac{nM}{RT}}$ to achieve regret $R$. 
Since the algorithm runs deterministic majority on a pool of $k=\tO{\frac{nM}{RT}}$ experts, then this algorithm uses $\tO{\frac{nM}{RT}}$ space. 
However, for $M=\Omega({RT})$, the algorithm must use space that is near-linear in the number of experts $n$, which is undesirable when $n$ is large. 
(For  a detailed formal argument, see \autoref{sec:det-alg})

Therefore, it is natural to ask whether there exists a deterministic algorithm that is more space-efficient than this straighforward approach. 
Unfortunately, we first show that this is not the case:
\begin{theorem}[Memory lower bound for deterministic algorithms; also see \autoref{thm:lb:random}]\label{thm:main-intro}
For $n=o(2^T)$, any deterministic algorithm that achieves $R$ regret for the discrete prediction with experts problem must use $\Omega\left(\frac{nM}{RT}\right)$ space when the best expert makes $M$ mistakes.
\end{theorem}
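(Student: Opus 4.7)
The plan is to establish the space lower bound via a direct adversarial construction that exploits the determinism of the algorithm $\calD$. Let $s$ denote the number of bits of memory used by $\calD$; since $\calD$ is deterministic, both its prediction on each day and its state transitions are fixed functions of the input seen so far. The goal is to design an adaptive stream on which $\calD$ is forced to make more than $M + RT$ mistakes while at least one expert makes at most $M$ mistakes, which contradicts the regret guarantee whenever $s = o\!\left(nM/(RT)\right)$.

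First, I would partition the $T$ days into $M+1$ epochs of roughly $T/M$ days each. The adversary tracks, inside epoch $j$, a ``live pool'' $L_j\subseteq [n]$ of experts who have not yet erred during this epoch, with $L_j = [n]$ at the start. The key invariant maintained across epochs is that any expert surviving through all $M+1$ epochs has made at most $M$ total mistakes, which is enough to serve as the best expert. Within a given epoch, on each day the adversary simulates $\calD$ (possible by determinism) to learn its current state $\sigma$ before seeing the expert predictions, then restricts attention to expert-prediction patterns in which every live expert in $L_j$ casts the same vote $b\in\{0,1\}$. A counting/pigeonhole argument against the $2^s$ reachable states identifies a pattern $x$ with this property such that $\calD$ outputs $1-b$; setting the true outcome to $b$ forces $\calD$ to err while no expert in $L_j$ is eliminated. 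The votes of experts outside $L_j$ are chosen so that on average half of them ``die'' (predict the wrong value), which controls both how fast $L_j$ shrinks and how many mistakes per epoch can be extracted from $\calD$.

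Aggregating, one can hope to force $\widetilde\Omega(n/s)$ errors per epoch while keeping $L_j$ non-empty, for a total of $\widetilde\Omega(nM/s)$ mistakes across the $M+1$ epochs. The regret guarantee forces this to be at most $M + RT$; assuming (as one may by the previously known $\widetilde\Omega(n/(R^2T))$ bound) that $M \le RT$, this simplifies to $\widetilde\Omega(nM/s) = O(RT)$, giving $s = \widetilde\Omega\!\left(nM/(RT)\right)$, as claimed. The hypothesis $n = o(2^T)$ is used precisely to ensure that the algorithm's memory cannot trivially encode a distinct state for every expert, so that the counting step is non-trivial.

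The main obstacle, and where the work concentrates, is the combinatorial claim that for every reachable state $\sigma$ and every non-trivial live pool $L_j$, a pattern with all of $L_j$ voting a common bit $b$ and $\calD$ predicting $1-b$ must exist, together with a per-day accounting that bounds how many non-live experts must be ``spent'' to realize such a pattern. I would expect to establish this either (i) by an averaging/indistinguishability argument showing that $\calD$'s prediction function, viewed as a $2^s$-way partition of $\{0,1\}^n$, cannot separate the $L_j$-restricted ``vote-$b$'' slice from the ``vote-$(1-b)$'' slice unless $s$ is large, or (ii) by a reduction to a multi-instance one-way communication problem (an $M$-fold variant of Augmented-Index) where each of the $M$ epochs encodes an independent instance, and distinguishing them requires $\widetilde\Omega(n/(RT))$ memory per instance. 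Combined with the epoch-by-epoch accounting above, either route yields the advertised bound.
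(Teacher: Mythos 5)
Your proposal takes a genuinely different route from the paper: you attempt a direct adversarial/combinatorial argument that partitions the $T$ days into $M+1$ epochs, maintains a ``live pool'' $L_j$ of experts that have not yet erred in the current epoch, and uses a pigeonhole argument over the $2^s$ states of a deterministic $s$-bit algorithm to force errors while preserving the pool. The paper instead proves the bound via information complexity: it reduces the streaming problem to the multi-party communication problem $\biasdetect_n$ on a $T\times n$ bit matrix, applies a direct-sum argument across columns, and then proves a single-coordinate lower bound $\Omega(M)$ against any one-way protocol with error $2^{-\Theta(T)}$, by defining ``informative'' messages and showing via a contradiction argument that $\sum_j p_j = \Omega(M)$, where $p_j$ is the probability that the message up to player $j$ is informative for $C_j$.

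There is a genuine gap in your proposal, and it sits precisely at what you yourself flag as ``the main obstacle.'' The combinatorial claim --- that for every reachable state $\sigma$ and every non-trivial live pool $L_j$ there is a day's input $x$ with all of $L_j$ voting a common bit $b$ and $\calD$ predicting $1-b$ --- is not established, and as stated it is false. If the algorithm's memory tracks a specific small set $P\subseteq[n]$ of experts and the prediction function $f_\sigma$ simply copies (say) the majority of $P$, then whenever $L_j\supseteq P$ and all of $L_j$ votes $b$, the algorithm predicts $b$ as well, and no error can be forced. In general an $s$-bit algorithm can ``protect'' about $s/\log n$ experts in this way, which is exactly the mechanism exploited by the near-matching deterministic upper bound in the paper; so the heart of a lower-bound proof must quantify how many experts the algorithm can simultaneously protect. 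Your pigeonhole step does not address this, and your per-epoch accounting is also internally inconsistent: you simultaneously ask that $L_j$ stay intact on error-forcing days and that $L_j$ shrink, and the ``$\widetilde\Omega(n/s)$ errors per epoch'' figure is asserted rather than derived. Finally, your reading of the hypothesis $n=o(2^T)$ (``memory cannot trivially encode a state per expert'') does not match its role in the paper, where it is needed so that a union bound over $n$ columns against a $2^{-\Theta(T)}$ failure probability still leaves a high-probability event under which the reduction from $\biasdetect_n$ to the experts problem is correct. Your route (ii) --- a reduction to an $M$-fold Augmented-Index-style problem --- is closer in spirit to what could work and to what the paper actually does, but it is only named, not carried out.
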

Taken together with the deterministic procedure above, this resolves the deterministic streaming complexity of online learning with experts.

At a conceptual level, our lower bound in Theorem~\ref{thm:main-intro} shows that surprisingly, the number $M$ of the mistakes made by the best expert is an intrinsic parameter that governs the abilities and limitations of deterministic algorithms in this model. 
In fact, we show a stronger result in~\autoref{thm:lb:random} that any randomized algorithm that succeeds with probability at least $1-\exp(-T)$ must use $\Omega\left(\frac{nM}{RT}\right)$ space when the best expert makes $M$ mistakes. 

Moreover, we give an alternative proof in the regime when $M = \Omega(T)$. The proof differs from the proof of \autoref{thm:main-intro}. Instead, it leverages the communication complexity of  a new set disjointness problem, recently proposed by \cite{kamath2021simple}.     The statement is technically weaker \autoref{thm:main-intro}, and  appears in the appendix; see \autoref{sec:alternative}.

\paragraph{Overview of the proof of \autoref{thm:main-intro}.}
To prove the theorem, we consider the communication problem of  \(\eps\shortdash\diffdist\). 
It  combines $n$ instances of the distributed detection problem from \cite{braverman2016communication} and was first proposed by the prior work of  \cite{SrinivasWXZ22} to prove space lower bounds  for expert learning in random order stream.

Specifically, for fixed $T$, the \(\eps\shortdash\diffdist\) problem with $\eps=\frac{M}{T}$ consists of $T$ players, who each hold $n$ bits, indexed from $1$ to $n$. 
The players must distinguish between: 
\begin{enumerate}
\item the NO case $\calD^{(n)}_\NO$, in which every bit for every player is drawn i.i.d. from a fair coin and
\item
the YES case $\calD^{(n)}_\YES$, in which an index $L\in[n]$ is selected arbitrarily and the $L$-th bit of each player is chosen i.i.d. from a Bernoulli distribution with parameter $\left(1-\frac{M}{T}\right)$, while all other bits for every player are chosen i.i.d. from a fair coin. 
\end{enumerate}
At a high level, the proof proceeds in two steps:
\begin{enumerate}
    \item First, we prove a communication complexity lower bound for  \(\eps\shortdash\diffdist\) against any protocol that succeeds with probability $1-2^{-\Theta(T)}$, which includes deterministic protocols. 
    \item Second, we show that the  \(\eps\shortdash\diffdist\) problem can be reduced to the expert prediction problem in the streaming setting. 
\end{enumerate}
The second step is straightforward, and the idea was proposed by \cite{SrinivasWXZ22}. 
In the reduction, each player in an instance of \(\eps\shortdash\diffdist\) corresponds to a day of the expert problem. 
The $n$ bit input held by each player correspond to the $n$ expert predictions of each day. 
Therefore, in the NO case, each expert is correct on roughly half of the days. 
In the YES case, there is a single expert $L \in [n]$ that is correct on roughly $1/2+\delta$ of the days (for $\delta= 1/2 - M/T$), while all other experts randomly guess each day.  
Suppose that there is a streaming algorithm for the expert prediction problem  with average regret $\delta/2$. 
Then roughly speaking, in the YES case, the algorithm is correct approximately on $1/2+\delta/2$ of the days, while in the NO case where every expert is randomly guessing, the algorithm is correct on less than $1/2+\delta/2$ of the days. This distinguishes the YES and NO case  and thus solves \(\eps\shortdash\diffdist\).

For the second step, we   show that solving the \(\eps\shortdash\diffdist\)  problem with probability at least $1-2^{-\Theta(T)}$ requires $\Omega(nM)$ total communication. 

Observe that if the input is viewed as a $T\times n$ matrix, then $\calD^{(n)}_\NO$ is a product distribution across columns that can be written as $\zeta^n$, where $\zeta$ is the distribution over a single column such that all entries of the column are i.i.d.\ Bernoulli with parameter $\frac{1}{2}$. 
We view $\calD^{(n)}_\NO$ as a hard distribution and applies an information complexity analysis. 
By a direct sum argument, it suffices to show that the single column problem, i.e., distinguishing between $\calD^{(1)}_\NO$ and $\calD^{(1)}_\YES$ (i.e., for $n=1$), requires $\Omega(M)$ total communication. 


Let $(C_1,C_2,\ldots, C_T)$ be a single column drawn from the hard distribution---namely, the NO case where each player holds  one i.i.d.\ Bernoulli with parameter $1/2$. 
Let $A$  be a fixed protocol  with success probability at least $1-\exp(-\Theta(T))$. For all $i<T$, let $M_i$ denote the message sent from player $P_i$ to player $P_{i+1}$ and $M_{<i} =\{M_j : j<i\}$. 
Let $\Pi= \Pi(C_1,\cdots,C_T)$  be the communication transcript of $A$ given the input $(C_i)_{i=1}^T$. A standard information complexity argument \cite{Bar-YossefJKS04} implies that the total communication   is at least the \textit{information cost}, defined as $I(C_1,\ldots,C_T;\Pi(C_1,\ldots,C_T))$, where $I(X,Y)$ denotes the mutual information between random variables $X$ and $Y$.  

The key step of our proof is therefore to lower bound the information cost by $\Omega(M)$. The main ideas are the following.
For any $i \in [T]$, we say that $(M_i, M_{< i})$ is \emph{informative} for $i$ with respect to the input $C$ and the transcript $\Pi=(M_1,M_2,\ldots,M_T)$ if 
\begin{equation}
   \left| \Pr \left(C_i = 0  \mid M_i, M_{< i} \right) -  \Pr \left(C_i = 1  \mid M_i, M_{< i} \right) \right| \geq c
\end{equation}
for some constant $c > 0$. 
Otherwise, we say that $M_i$ is uninformative so that intuitively, an informative message $M_i$ reveals sufficiently large information about $C_i$ so that the mutual information $I(M_i, C_i \mid M_{< i})$ would be large. 

Now for all $i\in [T]$, let $p_i $ be the probability that $(M_i, M_{< i})$ is informative (for $i$ with respect to $C$ and $\Pi$), taken over all possible inputs and randomness used in the protocol.  
It is straightforward to show that
\[I(\Pi; C_1,C_2,\ldots, C_T) = \sum_{j=1}^T  I\left(M_j ;  C_j \mid  M_{<j}\right) \geq \Omega \left(\sum_{j=1}^T p_j\right).\]
Namely, we first use the chain rule for mutual information to decompose the mutual information into the individual terms in the summation, which can be further decomposed using conditional entropy. 
Then the desired bound immediately follows from a standard bound on the binary entropy function. 
From here, it suffices to prove that
\[\sum_{j=1}^T p_j>\gamma\cdot M,\]
for some fixed constant $\gamma>0$. 

To that end, we observe that if $\sum_{j=1}^T p_j=o(M)$ then by Markov's inequality, the probability that the set $S$ of uninformative indices has size at least $T-o(M)$ is at least $\frac{9}{10}$. 
We show that by modifying $C$ on the uninformative indices $S$, we can find an input $C'$ on which $A$ cannot guarantee correctness with probability at least $1-\exp(-\Theta(T))$. 
Let $C'$ be an input that agrees with $C$ on the informative indices $[T]\setminus S$, so that $C'_i=C_i$ for $i\in[T]\setminus S$, and is chosen arbitrarily on uninformative indices $S$. 
By definition of informative index, the probability that the protocol $A$ generates $\Pi$ on input $C'$ is at least $(1-c)^T\ge e^{-cT}$ times the probability that the protocol $A$ generates $\Pi$ on input $C$. 
However, since $C$ can differ from $C'$ on $S$, then $C$ can differ from $C'$ on $|S|=T-o(M)$ indices and it follows that there exists a choice of $C'$ that contains fewer than $\frac{M}{2}$ zeros such that $A$ will also output $\Pi$ with probability at least $\frac{e^{-cT}}{2}$. 
On the other hand, since $\Pi$ corresponds to a transcript for which $A$ will output NO, then $A$ cannot succeed with probability $1-\frac{e^{-cT}}{8}$ on the input $C'$. 
On the other hand, a YES instance will generate $C'$ with probability $2^{-T}$, which is a contradiction, and thus it follows that $\sum_{j=1}^T p_j=\Omega(M)$, as desired.  

\paragraph{Algorithms for adaptive inputs.}
On the positive side, we show that there exists a randomized algorithm for the discrete prediction with experts problem that is robust to adaptive inputs:
\begin{theorem}[Robust algorithms against adaptive inputs]
\label{thm:experts:dp}
Let $R > \frac{64\log^2 n}{T}$, and suppose the best expert makes at most $M\le\frac{R^2 T}{128\log^2 n}$ mistakes. 
Then there exists an algorithm for the discrete prediction with experts problem that uses $\widetilde{O}\left(\frac{n}{R\sqrt{T}}\right)$ space and achieves regret at most $R$, with probability at least $1-\frac{1}{\poly(n,T)}$.
\end{theorem}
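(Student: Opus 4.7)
The plan is to adapt the pool-based sampling framework used in prior work to the adaptive setting by leveraging the intrinsic robustness of the (randomized) multiplicative weights algorithm noted earlier in the excerpt. The proposed algorithm maintains a pool $S$ of $k = \widetilde{\Theta}(n/(R\sqrt{T}))$ experts sampled uniformly from $[n]$ and runs a multiplicative weights instance restricted to $S$ with learning rate tuned to the regret target $R$; a fresh pool drawn from new random bits replaces $S$ whenever the aggregate weight of $S$ drops below a threshold, signaling that $S$ contains no expert that is competitive with the overall best.

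For the regret analysis, I would proceed in two steps. First, the standard analysis of multiplicative weights gives within-pool regret $O(\sqrt{\tau \log k})$ across any $\tau$ rounds, charged against the best expert currently present in the pool. Second, I would use the hypothesis $M \le R^2 T/(128 \log^2 n)$ to bound both the number of pool replacements and the number of rounds the algorithm spends inside pools that do not contain the globally best expert: because uniform sampling places the best expert into any given pool with probability $k/n$, a concentration argument across the polynomially many phases ensures that before the regret budget $RT$ is exhausted, a pool containing the best expert is reached and then maintained by the weight threshold. Summing the phase-wise regrets with the worst-case contributions of the short ``exploration'' phases would yield total regret $O(R \cdot T)$, and the per-phase memory footprint of $k$ weights gives the claimed space bound.

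The main obstacle will be the robustness argument. While multiplicative weights by itself is known to be robust to adaptive inputs, the random pool selection introduces an additional source of randomness that an adaptive adversary could in principle exploit through the predictions it observes. I would handle this in two pieces: (a) within a single phase, MW's prediction is a randomized aggregate over $k$ experts, so each output only weakly depends on any individual pool member, and a standard information-theoretic argument should bound the mutual information between the transcript of the phase's predictions and the identity of $S$; (b) across phases, fresh randomness is drawn for each resampling, so adaptive composition combined with the bounded number of phases keeps the adversary's cumulative advantage negligible with probability at least $1 - 1/\poly(n,T)$. Quantifying both parts tightly---particularly how much information about $S$ leaks through MW's randomized outputs---is the principal technical hurdle, and it is precisely what forces the space bound to degrade by roughly a factor of $\sqrt{T}/\polylog(n)$ relative to the oblivious random-order result of \cite{SrinivasWXZ22}.
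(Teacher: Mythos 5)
Your proposal is structurally different from the paper's approach, and the piece you flag as the ``principal technical hurdle'' is in fact a genuine gap, not a routine quantification step. Let me explain why.

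The paper's argument runs $m = O(\sqrt{T}\log(nT))$ \emph{independent} copies of the oblivious pool-sampling algorithm of \cite{SrinivasWXZ22}, each with regret $R/4$ and space $\widetilde{O}(n/(RT))$, and at every time step outputs a differentially private median of the $m$ predictions (with per-query privacy $\eps = O(1/(\sqrt{T}\log(nT)))$). Advanced composition over the $T$ adaptive interactions gives $(O(1), 1/\poly(n,T))$-DP on the joint randomness of the copies, and the DP generalization theorem then implies that whenever the aggregated output is wrong, a constant fraction of the copies must also be wrong, which converts the per-copy mistake bound into an overall regret bound. The $\sqrt{T}$ space overhead comes directly from the number of copies needed to support $T$ rounds of advanced composition at constant total privacy loss.

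Your proposal instead keeps a \emph{single} pool of size $\widetilde{\Theta}(n/(R\sqrt{T}))$, runs MW on it, and argues robustness via (a) a claimed mutual-information bound on how much a phase's predictions reveal about $S$, and (b) fresh randomness across phases. Part (a) is exactly the obstacle the paper identifies as the reason the existing pool-sampling algorithms are \emph{not} adaptively robust, and it is not true in general that MW's predictions leak little about the pool: in the discrete prediction setting, the adversary controls the expert predictions each day and can, over $T$ rounds, design queries that progressively pin down which experts carry weight in $S$ (e.g., split-the-pool tests). Over $T$ rounds the transcript can carry up to $T$ bits, which is far more than $\log\binom{n}{k}$ for the relevant $k$, so an information-theoretic bound of the kind you need cannot follow from generic counting; you would need a mechanism that actively limits leakage, which is precisely what the paper's private median provides. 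The known robustness of full MW relies on tracking all $n$ experts, so there is no hidden randomness to exploit; it does not transfer to MW on a random sub-pool. The regret-side argument is also underspecified: with $k/n = \widetilde{\Theta}(1/(R\sqrt{T}))$, it takes $\widetilde{\Theta}(R\sqrt{T})$ resamplings in expectation to catch the best expert, and your proposal does not account for the regret accumulated across those phases against an adversary that can adaptively make each sampled pool look bad once it infers anything about it. To repair the proposal you would essentially need to reintroduce a randomness-hiding mechanism, which is what the paper's DP-based aggregation does.
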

We remark that~\autoref{thm:experts:dp} provides a smooth trade-off between the space and regret, almost all the way to the information-theoretic limit of $R=O_n\left(\sqrt{\frac{1}{T}}\right)$ for general worst-case input. 
However, it incurs a multiplicative space overhead of $\widetilde{O}(\sqrt{T})$ compared to the optimal algorithms for oblivious input. 
Thus we believe the complete characterization of the space complexity of the discrete prediction with experts problem with adaptive input is a natural open question resulting from our work. 

Our algorithm for~\autoref{thm:experts:dp} uses differential privacy (DP) to hide the internal randomness of our algorithm from the adaptive adversary. 
The technique was first proposed by the recent work \cite{HassidimKMMS20, AttiasCSS23,BeimelKMNSS22} to achieve adversarial robustness in data streaming algorithms.
To exploit it for solving our problem of expert learning, we run $\tO{\sqrt{T}}$ copies of the oblivious algorithm of \cite{SrinivasWXZ22} and then use advanced composition to show that running a private median on each of the $\tO{\sqrt{T}}$ copies across $T$ interactions guarantees differential privacy. 
Correctness then follows from generalization of DP. 

\subsection{Related Work}

\paragraph{The experts problem.}
The experts problem has been extensively studied~\cite{CesaBianchi06}, both in the discrete decision setting \cite{LittlestoneW94} and in the setting where costs are determined by various loss functions \cite{HausslerKW95,Vovk90,Vovk98,Vovk99,Vovk05}. 
Hence, the experts problem can be applied to many different applications, such as portfolio optimization \cite{CoverO96, cover1991universal}, ensemble boosting \cite{freund1997decision}, and forecasting \cite{Herrmann22}. 
Given certain assumptions on the expert, such as assuming the experts are decisions trees \cite{HelmboldS97,TakimotoMV01}, threshold functions \cite{MaassW98}, or have nice linear structures \cite{KalaiV05}, additional optimizations have been made to improve the algorithmic runtimes for the experts problem and more generally, existing work has largely ignored optimizing for memory constraints in favor of focusing on time complexity or regret guarantees, thus frequently using $\Omega(n)$ memory to track the performance of each expert.

Recently, \cite{SrinivasWXZ22} introduced the study of memory-regret trade-offs for the experts problem. 
For $n\gg T$, \cite{SrinivasWXZ22} showed that the space complexity of the problem is $\tilde{\Theta}\left(\frac{n}{R^2T}\right)$ in the random-order streams, but also gave a randomized algorithm that uses $\tO{\frac{n}{RT}}$ space for arbitrary-order streams when the number of mistakes $M$ made by the best expert is ``small''. 
Subsequently, \cite{PengZ22} considered the online learning with experts problem for $T\gg n$, introducing a general space-regret trade-off framework that achieves $o(T)$ regret using $o(n)$ memory, including $O_n(T^{4/5})$ regret with $\O{\sqrt{n}}$ space and $O_n(T^{0.67})$ regret with $\O{n^{0.99}}$ space. 

\paragraph{Adaptive inputs.}
Motivated by non-independent inputs and adversarial attacks, adaptive inputs have recently been considered in the centralized model~\cite{CherapanamjeriN20,KontorovichSS22,CherapanamjeriN22,CherapanamjeriSWZZZ23}, in the streaming model~\cite{AvdiukhinMYZ19,Ben-EliezerJWY21,HassidimKMMS20,WoodruffZ21,BravermanHMSSZ21,ChakrabartiGS22,Ben-EliezerEO22,AjtaiBJSSWZ22,AssadiCGS22,Cohen0NSSS22,AttiasCSS23,DinurSWZ23}, and in the dynamic model~\cite{Wajc20,BeimelKMNSS22}. 
In particular, algorithms robust to inputs that can depend on the previous outputs by the algorithm, i.e., black-box attacks, are also robust to situations in which future inputs may be dependent on previous outputs. 
This is especially relevant in applications such as forecasting, in which a prediction on day $i$ can lead to a series of actions that might impact outcomes and expert predictions on day $i+1$ and beyond. 

Adaptive adversaries have received considerable attention in literature for online learning when the goal is simply to achieve the best possible regret~\cite{BorodinE98,CesaBianchi06,McMahanB04}. 
Building off a line of results on multi-armed bandit problems \cite{AuerCFS02,AwerbuchK04,Kleinberg04}, the work of \cite{MerhavOSW02} first considered the experts setting against memory-bounded adaptive adversaries, giving an algorithm with regret $\O{T^{2/3}}$. 
An early paper of \cite{FariasM06} introduced a family of algorithms for adaptive inputs, but provided guarantees using concepts not quite related to the standard definitions of regret. 
More recent works have explored online learning with additional considerations, such as alternative quantities to optimize \cite{DekelTA12}, additional switching costs~\cite{Cesa-BianchiDS13,DekelDKP14,RouyerSC21}, and feedback graphs~\cite{AroraMM19}. 
The closest work to our setting is the recent result by~\cite{PengZ22} showing that no algorithm using space sublinear in $n$ can achieve regret sublinear in $T$ when the input is chosen by an adversary with access to the internal state of the algorithm, i.e., a white-box adversary. 

\paragraph{Concurrent and independent work.} 
Concurrent to our work, \cite{PengR23} considered a variant of the problem where at each time, the algorithm selects an expert instead of a prediction. They then introduce an algorithm robust against an adaptive adversary who observes the specific expert chosen by the algorithm at each time, as well as lower bounds for any algorithm robust to such an adversary. 

One way to ensure adversarial robustness is through deterministic algorithms. On that end, we achieve stronger lower bounds for deterministic algorithms, showing that there must be a dependency on the number $M$ of mistakes made by the best expert, i.e., any deterministic algorithm achieving amortized regret $R$ must use $\widetilde{\Omega}\left(\frac{nM}{RT}\right)$ space. In fact, when the number of mistakes $M$ made by the best expert is sufficiently small, i.e., $M=\O{\frac{R^2T}{\log^2 n}}$ for amortized regret $R$, we give a randomized upper bound that uses \emph{less} space than this lower bound. By comparison, the lower bound of \cite{PengR23} shows that any algorithm achieving $R$ amortized regret must use $\widetilde{\Omega}\left(\sqrt{\frac{n}{R}}\right)$ space, though their lower bound also applies to randomized algorithms. 

Due to the difference in setting, our algorithmic techniques are quite different from those of \cite{PengR23}. We use a recent idea of \cite{HassidimKMMS20, AttiasCSS23,BeimelKMNSS22} to hide the internal randomness of our algorithm from the adversary whereas \cite{PengR23} rotates between groups of experts to prevent an adversary from inducing high regret by making a specific expert bad immediately after it is selected.


\section{Preliminaries}
\paragraph{Notations.}
For any $t\leq n$ and vector $(X_1,X_2,\cdots, X_n)$, we let $X_{<t}$ denote $(X_1,\cdots, X_{t-1})$,  $X_{\le t}= (X_1,\cdots, X_{t})$, and $X_{-t} = (X_1,\cdots, X_{t-1}, X_{t+1},\cdots, X_n)$. 
Also, $X_{>t}$ and $X_{\geq t}$
are defined similarly. Let $e_i$ denote the $i$th standard basis vector, and for any $S$, $e_S$   the vector that has a $1$ at index $i\in S$ and $0$ everywhere else. 
For a random variable $X$, let $H(X)$ denote its entropy.

\subsection{Information Theory} For any $p\in [0,1]$, we slightly abuse notation and  let $H(p) = -p\log_2 p - (1-p) \log_2(1-p) $ be the binary entropy function. The following is a standard  upper and lower bound of $H(p)$.
\begin{lemma}[Bound on the binary entropy function; see e.g. \cite{enwiki:1071507954}]\label{lem:entropy-bound}
For $p\in [0,1]$, the binary entropy function satisfies
\begin{equation*}
4p(1-p)\leq H (p)\leq 2 \sqrt{ p(1-p)}.
\end{equation*}
\end{lemma}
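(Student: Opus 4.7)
Both inequalities are invariant under the reflection $p \mapsto 1-p$, so I would reduce attention to $p \in [0, \tfrac{1}{2}]$. The strategy for each bound is to introduce the difference function, observe it vanishes at both endpoints, and then use a second-derivative analysis to show that the sign pattern of the first derivative forces a single interior maximum, hence non-negativity throughout.

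For the lower bound, let $g(p) := H(p) - 4p(1-p)$. One checks $g(0) = 0$ and $g(\tfrac{1}{2}) = 1 - 1 = 0$. Using $H'(p) = \log_2\!\frac{1-p}{p}$, one finds $g'(\tfrac{1}{2}) = 0$, $g'(p) \to +\infty$ as $p \to 0^{+}$, and
\[
g''(p) \;=\; -\frac{1}{p(1-p)\ln 2} + 8,
\]
which is monotone in $p(1-p)$ and changes sign exactly once on $(0, \tfrac{1}{2})$. Consequently $g'$ is first decreasing then increasing on $[0, \tfrac{1}{2}]$, descending from $+\infty$ to an interior minimum and climbing back to $0$ at $\tfrac{1}{2}$. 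Since $g'$ reaches $0$ from below at $\tfrac{1}{2}$, this minimum is strictly negative, so $g'$ has exactly one zero in $(0, \tfrac{1}{2})$. Therefore $g$ rises from $0$ at $p = 0$ to a unique interior maximum and decreases back to $0$ at $p = \tfrac{1}{2}$, yielding $g \ge 0$ on $[0, \tfrac{1}{2}]$.

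For the upper bound, let $\phi(p) := 2\sqrt{p(1-p)} - H(p)$. Again $\phi(0) = \phi(\tfrac{1}{2}) = 0$, and the expansion $\phi(p) \sim 2\sqrt{p}$ as $p \to 0^{+}$ shows $\phi$ is strictly positive near the left endpoint. Direct computation gives
\[
\phi''(p) \;=\; \frac{1}{p(1-p)\ln 2} \;-\; \frac{1}{2\,(p(1-p))^{3/2}},
\]
which is again a monotone function of $p(1-p)$ and changes sign exactly once on $(0, \tfrac{1}{2})$. Repeating the derivative-tracking argument---using $\phi'(\tfrac{1}{2}) = 0$ and $\phi'(p) \to +\infty$ as $p \to 0^{+}$---we find that $\phi'$ descends from $+\infty$ to a negative interior minimum and rises to $0$ at $\tfrac{1}{2}$, so $\phi$ attains a single interior maximum sandwiched between its two boundary zeros, giving $\phi \ge 0$ on $[0, \tfrac{1}{2}]$.

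The main obstacle is that neither $g$ nor $\phi$ is concave on the full interval $[0, \tfrac{1}{2}]$---for instance, $\phi''(\tfrac{1}{2}) > 0$---so the slick ``concavity plus vanishing endpoints'' argument does not directly apply. The actual work lies in extracting the single sign change of the second derivative and combining it with the boundary behavior of $g'$ and $\phi'$ to rule out any downward excursion below zero in the interior.
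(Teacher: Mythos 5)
The paper states this lemma without proof, citing a standard external reference, so there is no in-house argument to compare against; I am evaluating your proposal on its own terms. The overall second-derivative strategy is sound, and the lower-bound branch is correct as written: $g''(p) = 8 - \tfrac{1}{p(1-p)\ln 2}$ is manifestly an increasing function of $u := p(1-p)$, so the single sign change, the shape of $g'$, and the non-negativity of $g$ all follow cleanly.

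The upper-bound branch leans on one assertion that deserves justification. You write that $\phi''(p) = \tfrac{1}{p(1-p)\ln 2} - \tfrac{1}{2(p(1-p))^{3/2}}$ is ``again a monotone function of $p(1-p)$,'' by analogy with $g''$. But unlike $g''$, this is a difference of two terms that are \emph{both} decreasing in $u$, so monotonicity in $u$ is not automatic. It does hold on the range that matters: differentiating in $u$ shows the expression is increasing in $u$ precisely when $u < \bigl(\tfrac{3\ln 2}{4}\bigr)^2 \approx 0.27$, and here $u = p(1-p) \le \tfrac{1}{4} < 0.27$, so you are safe — but this computation should appear. In fact you do not need monotonicity of $\phi''$ at all, only the single sign change, and that is immediate: $\phi''(p) \ge 0 \iff 2\sqrt{p(1-p)} \ge \ln 2 \iff p(1-p) \ge \tfrac{(\ln 2)^2}{4} \approx 0.12$, and since $p(1-p)$ is strictly increasing on $(0,\tfrac12]$ from $0$ up through $\tfrac14$, the sign flips exactly once. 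With either fix in place, the rest of the upper-bound argument — $\phi' \to +\infty$ at $0$, $\phi'(\tfrac12) = 0$, decreasing-then-increasing shape forcing a unique interior zero of $\phi'$, hence $\phi$ rises to a single maximum and falls back to $0$ — goes through and establishes the bound.
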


\subsection{Communication Complexity}
\begin{definition}[Mutual information]
Let $X$ and $Y$ be a pair of random variables with joint distribution $p(x,y)$. 
Then the \emph{mutual information} is defined as $I(X;Y):=\sum_{x,y}p(x,y)\log\frac{p(x,y)}{p(x)p(y)}$, for marginal distributions $p(x)$ and $p(y)$.  
\end{definition}

In a multi-party communication problem of $t$ players, each player is given $x_i \in \mathcal{X}_t$. They  communicate according to fixed protocol to compute a function $f : \mathcal{X}_t\times \cdots \times \mathcal{X}_t \rightarrow \mathcal{Y}$. 
A protocol $\Pi$ is called a
$\delta$-error protocol for $f$  if there exists a function $\Pi_{\text{out}}$ such that $\operatorname{Pr}\left[\Pi_{\text {out }}(\Pi(x, y))=f(x, y)\right] \geqslant 1-\delta$.
For a (multi-party) communication problem,   we denote the transcript of all communication in a
protocol as $\Pi \in \{0,1\}^*$. The communication
cost of a protocol, as a result, is the bit length of the transcript. 
Let $R_\delta(f)$ denote the communication
cost of the best $\delta$-error protocol for $f$.
\begin{definition}[Information cost]
Let $\Pi$ be a randomized protocol that produces a random variable $\Pi(X_1,\ldots,X_T)$ as a transcript on inputs $X_1,\ldots,X_T$ drawn from a distribution $\mu$. 
Then the \emph{information cost} of $\Pi$ with respect to $\mu$ is defined as $I(X_1,\ldots,X_T;\Pi(X_1,\ldots,X_T))$.  
\end{definition}

\begin{definition}[Information complexity]
The information complexity of a function $f$ with respect to a distribution $\mu$ and failure probability $\delta$ is the minimum information cost of a protocol for $f$ with respect to $\mu$ that fails with probability at most $\delta$ on every input and denoted by $\IC_{\mu,\delta}(f)$. 
\end{definition}

\begin{lemma}[Information cost decomposition lemma, Lemma 5.1 in~\cite{Bar-YossefJKS04}]
\label{lem:decompose}
Let $\mu$ be a mixture of product distributions and suppose $\Pi$ is a protocol for inputs $(X_1,\ldots,X_T)\sim\mu^n$. 
Then $I(X_1,\ldots,X_T;\Pi(X_1,\ldots,X_T))\ge\sum_{i=1}^n I(X_{1,i},\ldots,X_{T,i};\Pi(X_1,\ldots,X_T))$, where $X_{i,j}$ denotes the $j$-th component of $X_i$. 
\end{lemma}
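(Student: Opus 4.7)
The plan is to reduce the lemma to a telescoping application of the chain rule for mutual information, exploiting the observation that the $n$ ``columns'' $C_j:=(X_{1,j},\ldots,X_{T,j})$ are drawn i.i.d.\ from $\mu$ under $\mu^n$ and are hence mutually independent. First, I would expand the left-hand side by the chain rule:
\begin{equation*}
I(X_1,\ldots,X_T;\Pi) \;=\; \sum_{j=1}^{n} I\bigl(C_j;\Pi \mid C_{<j}\bigr),
\end{equation*}
where $C_{<j}:=(C_1,\ldots,C_{j-1})$. This step simply re-organizes the joint input according to its $n$ column blocks.

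Next, I would establish the termwise inequality $I(C_j;\Pi \mid C_{<j}) \ge I(C_j;\Pi)$ for every $j\in[n]$. Applying the chain rule for mutual information in two different orders gives the algebraic identity
\begin{equation*}
I(C_j;\Pi \mid C_{<j}) - I(C_j;\Pi) \;=\; I(C_j;C_{<j} \mid \Pi) - I(C_j;C_{<j}).
\end{equation*}
Since the columns $C_1,\ldots,C_n$ are i.i.d.\ under $\mu^n$, $C_j$ is independent of $C_{<j}$, so the second term on the right vanishes. The first term is nonnegative by the nonnegativity of mutual information, which establishes the termwise inequality. Summing over $j$ yields the claim.

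The main conceptual subtlety to flag is that the ``mixture of product distributions'' assumption on $\mu$ concerns independence of the $T$ coordinates \emph{within} a single column (conditioned on a latent mixing variable), whereas the proof above only uses independence \emph{across} columns, which is the immediate consequence of sampling from the product distribution $\mu^n$. The mixture structure becomes indispensable only when one wants the conditional strengthening (conditioning the mutual information on the per-column partitioning variables) that appears in the original formulation of Bar-Yossef et al.\ for direct-sum arguments; for the unconditional decomposition stated here no such refinement is required, so I do not anticipate a substantive obstacle in carrying out the argument.
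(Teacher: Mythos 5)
The paper does not prove this lemma; it imports it verbatim from Bar-Yossef, Jayram, Kumar, and Sivakumar (Lemma 5.1 there), so there is no in-paper argument to compare against. Your proof is nevertheless a correct and self-contained derivation: the chain rule over the column blocks $C_j$, the identity
\begin{equation*}
I(C_j;\Pi \mid C_{<j}) - I(C_j;\Pi) \;=\; I(C_j;C_{<j} \mid \Pi) - I(C_j;C_{<j}),
\end{equation*}
the vanishing of $I(C_j;C_{<j})$ because the columns are i.i.d.\ under $\mu^n$, and the nonnegativity of $I(C_j;C_{<j}\mid\Pi)$ together give $I(C_j;\Pi\mid C_{<j})\ge I(C_j;\Pi)$ termwise, and summing recovers the claim. (An equivalent one-liner: by independence of the $C_j$, $H(C_1,\ldots,C_n)=\sum_j H(C_j)$, while subadditivity gives $H(C_1,\ldots,C_n\mid\Pi)\le\sum_j H(C_j\mid\Pi)$; subtracting yields the same inequality.) Your final remark is also accurate: for this unconditional formulation the only structural fact used is independence across columns, which follows from sampling from $\mu^n$ regardless of whether $\mu$ is a mixture of products. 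The mixture hypothesis is the ingredient that makes the \emph{conditional} version of the lemma in Bar-Yossef et al.\ (with an auxiliary partitioning variable $\mathbf{D}$) go through; the present paper applies the lemma to $\calD_\NO^{(n)}=\zeta^n$, where no such conditioning is needed, so the hypothesis is indeed idle here.
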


\begin{lemma}[Information complexity lower bounds communication complexity; Proposition 4.3 \cite{Bar-YossefJKS04}]\label{lem:ic-cc}
    For any distribution $\mu$ and error  $\delta$, $R_\delta (f) \geq \IC_{\mu,\delta}(f)$. 
\end{lemma}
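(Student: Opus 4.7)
The plan is to exploit the basic inequality from information theory that the mutual information between the input and the transcript cannot exceed the entropy of the transcript, which in turn is bounded by its bit length. Concretely, I would start by fixing an arbitrary $\delta$-error protocol $\Pi$ for $f$ with communication cost $C = R_\delta(f)$. Since $C$ upper bounds the length of the transcript $\Pi(X_1,\dots,X_T) \in \{0,1\}^*$ on every input, I have the deterministic bound $|\Pi(X_1,\dots,X_T)| \leq C$, which gives $H(\Pi(X_1,\dots,X_T)) \leq C$ regardless of the input distribution $\mu$.

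Next, I would invoke the elementary identity $I(X;Y) = H(Y) - H(Y \mid X) \leq H(Y)$, applied with $X = (X_1,\dots,X_T)\sim\mu$ and $Y = \Pi(X_1,\dots,X_T)$. Combining with the previous step yields
\[
I(X_1,\dots,X_T;\Pi(X_1,\dots,X_T)) \leq H(\Pi(X_1,\dots,X_T)) \leq C.
\]
Since $\Pi$ is a $\delta$-error protocol, its information cost with respect to $\mu$ is an upper bound candidate for $\IC_{\mu,\delta}(f)$; taking the infimum over all $\delta$-error protocols gives $\IC_{\mu,\delta}(f) \leq C = R_\delta(f)$, which is exactly the claimed inequality.

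The only subtlety, and the one point that requires care rather than calculation, is handling randomness in the protocol and the exact convention for communication cost (worst-case bit length of the transcript versus expected length). Under the standard convention that $R_\delta(f)$ denotes the worst-case transcript length, the deterministic bound $|\Pi| \leq C$ used above is immediate; under the expected-length convention one instead uses $H(\Pi) \leq \mathbb{E}[|\Pi|]$ via Shannon's source coding bound, which still yields the same conclusion. In either case, the argument is essentially one line once the definitions of $R_\delta$, $\IC_{\mu,\delta}$, and mutual information are unpacked, and no structural property of $f$ or $\mu$ is required.
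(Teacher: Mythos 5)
Your proposal is correct and is exactly the standard argument: $I(X;\Pi) \leq H(\Pi) \leq |\Pi| \leq R_\delta(f)$, followed by minimizing over protocols. The paper does not reprove this lemma---it simply cites Proposition 4.3 of Bar-Yossef, Jayram, Kumar, and Sivakumar---and their proof is the same chain of inequalities you give, including the observation that under the worst-case-length convention the entropy bound is immediate. The one small point worth tightening is that you fix $\Pi$ to be an optimal protocol with $C = R_\delta(f)$ at the outset and then also say ``taking the infimum over all $\delta$-error protocols''; you only need one of these (either fix the optimal $\Pi$, or argue for arbitrary $\Pi$ and then infimize), but the logic is sound either way.
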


\subsection{Differential Privacy}
Our algorithmic results rely on the following tools from differential privacy.
\begin{definition}[Differential privacy, \cite{DworkMNS06}]
Given a privacy parameter $\eps>0$ and a failure parameter $\delta\in(0,1)$, a randomized algorithm $\calA:\calX^*\to\calY$ is $(\eps,\delta)$-differentially private if, for every pair of neighboring streams $S$ and $S'$ and for all $E\subseteq\calY$,
\[\PPr{\calA(S)\in E}\le e^{\eps}\cdot\PPr{\calA(S')\in E}+\delta.\]
\end{definition}


\begin{theorem}[Private median, e.g.,~\cite{HassidimKMMS20}]
\label{thm:dp:median}
Given a database $\calD\in X^*$, a privacy parameter $\eps>0$ and a failure parameter $\delta\in(0,1)$, there exists an $(\eps,0)$-differentially private algorithm $\privmed$ that outputs an element $x\in X$ such that with probability at least $1-\delta$, there are at least $\frac{|S|}{2}-m$ elements in $S$ that are at least $x$, and at least $\frac{|S|}{2}-m$ elements in $S$ in $S$ that are at most $x$, for $m=\O{\frac{1}{\eps}\log\frac{|X|}{\delta}}$. 
\end{theorem}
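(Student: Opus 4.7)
\textbf{Proof proposal for Theorem~\ref{thm:dp:median}.} The plan is to instantiate the exponential mechanism with a carefully chosen utility function that encodes how close a candidate output is to the true median. Concretely, assume a total ordering on $X$ and for each $x \in X$ define the score
\[
q(\calD, x) \;=\; -\Bigl|\,\#\{d \in \calD : d < x\} \;-\; \tfrac{|\calD|}{2}\,\Bigr|,
\]
so that $q(\calD, x) = 0$ iff $x$ is an exact median of $\calD$, and in general $-q(\calD,x)$ measures the ``rank gap'' between $x$ and the middle of $\calD$.

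The first step is to bound the sensitivity of $q$. Because inserting or deleting a single element of $\calD$ changes both the counting term $\#\{d < x\}$ and $|\calD|/2$ by at most $1$ (in fact by $1$ and $1/2$ respectively), we get $|q(\calD, x) - q(\calD', x)| \leq 1$ for any neighboring databases $\calD, \calD'$ and any $x$, so the global sensitivity $\Delta q \leq 1$. The second step is to run the exponential mechanism: sample $x \in X$ with probability proportional to $\exp\!\bigl(\tfrac{\eps}{2}\, q(\calD, x)\bigr)$. The standard analysis of the exponential mechanism (McSherry--Talwar) then yields $(\eps, 0)$-differential privacy, since the sensitivity is $1$.

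For the utility guarantee, I would apply the classical tail bound for the exponential mechanism: with probability at least $1-\delta$, the sampled $x$ satisfies
\[
q(\calD, x) \;\geq\; \max_{x^* \in X} q(\calD, x^*) \;-\; \frac{2}{\eps}\log\!\frac{|X|}{\delta}.
\]
The maximum on the right equals $0$, achieved by any true median of $\calD$. Setting $m = O\!\bigl(\tfrac{1}{\eps} \log\tfrac{|X|}{\delta}\bigr)$ therefore gives $|\#\{d \in \calD : d < x\} - |\calD|/2| \leq m$, which after bookkeeping (splitting elements equal to $x$ between the two sides) is exactly the two-sided rank guarantee claimed: at least $|\calD|/2 - m$ elements of $\calD$ are $\leq x$ and at least $|\calD|/2 - m$ are $\geq x$.

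The only subtlety I anticipate is the handling of ties, i.e., elements of $\calD$ equal to the output $x$, and the fact that the utility function must be defined over all of $X$ rather than only over elements that appear in $\calD$ (otherwise the sensitivity argument becomes more delicate when elements are inserted or deleted). Fixing the utility function on the full domain $X$ up front and being careful to break ties consistently (e.g.\ counting strict inequalities on one side and non-strict on the other) resolves this and yields the clean $(\eps, 0)$-DP claim; this is the step I would spend the most care on, since all other ingredients are direct invocations of standard exponential-mechanism theorems.
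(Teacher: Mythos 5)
The paper states this as an imported fact with a citation and gives no in-paper proof, so there is nothing to compare against directly; but your proposal is the standard (textbook) derivation of the private median with additive error $O\bigl(\tfrac{1}{\eps}\log\tfrac{|X|}{\delta}\bigr)$, and it is correct. Instantiating the exponential mechanism with the rank-deviation utility $q(\calD,x)=-\bigl|\#\{d\in\calD : d<x\}-|\calD|/2\bigr|$, noting bounded sensitivity, and then invoking the usual utility tail bound $\Pr\bigl[q(\calD,x)\le \mathrm{OPT}-\tfrac{2\Delta q}{\eps}(\log|X|+t)\bigr]\le e^{-t}$ with $t=\log(1/\delta)$ yields exactly the claimed two-sided rank guarantee. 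Two small refinements: under the add/remove neighboring convention the two summands of $q$ move in the same direction, so $\Delta q\le 1/2$ rather than $1$ (your phrasing that each term moves by ``at most $1$'' would na\"ively give $3/2$, but the correlated movement is what saves you); and $\mathrm{OPT}=\max_{x^*}q(\calD,x^*)$ is $-1/2$ rather than $0$ when $|\calD|$ is odd. Both only shift $m$ by a constant, which is absorbed into the big-$O$. Your concern about ties is valid but is handled precisely as you suggest: define $q$ over all of $X$ with strict inequality on one side, and observe that elements equal to the output $x$ contribute to both the ``$\le x$'' and ``$\ge x$'' counts, so $|\#\{d<x\}-|\calD|/2|\le m$ directly implies both $\#\{d\le x\}\ge |\calD|/2-m$ and $\#\{d\ge x\}\ge |\calD|/2-m$.
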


\begin{theorem}[Advanced composition, e.g.,~\cite{DworkRV10}]
\label{thm:adaptive:queries}
Let $\eps,\delta'\in(0,1]$ and let $\delta\in[0,1]$. 
Any mechanism that permits $k$ adaptive interactions with mechanisms that preserve $(\eps,\delta)$-differential privacy guarantees $(\eps',k\delta+\delta')$-differential privacy, where $\eps'=\sqrt{2k\ln\frac{1}{\delta'}}\cdot\eps+2k\eps^2$. 
\end{theorem}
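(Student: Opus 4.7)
The plan is to follow the standard privacy-loss-random-variable approach in three stages: bound the loss of a single step, assemble the $k$ steps into a martingale, and convert a concentration inequality on the total loss back into an approximate-DP guarantee. For neighboring inputs $S,S'$ and a single $(\eps,\delta)$-DP mechanism $\calM$, define the \emph{privacy loss} $L(o):=\ln\frac{\Pr[\calM(S)=o]}{\Pr[\calM(S')=o]}$ for $o\sim\calM(S)$. It is a standard fact that a tail bound of the form $\PPr{\sum_i L_i>\eps'}\le k\delta+\delta'$ on the cumulative privacy loss across the adaptive composition is equivalent to $(\eps',k\delta+\delta')$-DP of the composed mechanism, so it suffices to prove such a tail bound.

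First I would establish two single-step facts. Applying the $(\eps,\delta)$-DP definition to the ``bad set'' $B=\{o:\Pr[\calM(S)=o]>e^{\eps}\Pr[\calM(S')=o]\}$ gives $\PPr{\calM(S)\in B}\le\delta$, so with probability at least $1-\delta$ over $o\sim\calM(S)$ we have $L(o)\in[-\eps,\eps]$. Conditioned on this good event, the inequality $\ln(p/q)\le(p-q)/q$ combined with $p\le e^{\eps} q$ yields $\E{L\mid\text{good}}\le\eps\cdot(e^{\eps}-1)/(e^{\eps}+1)\le 2\eps^2$ for $\eps\le 1$.

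Next I would lift this to $k$ adaptive rounds. Let $\calF_i$ be the $\sigma$-algebra generated by the first $i$ outputs; because the adversary's choice of the $i$-th mechanism $\calM_i$ is $\calF_{i-1}$-measurable and must itself be $(\eps,\delta)$-DP on its input for every fixing of the history, the two single-step bounds apply conditional on $\calF_{i-1}$. A union bound over the $k$ per-step bad events contributes $k\delta$ to the failure probability; on the complementary event, the centered losses $L_i-\E{L_i\mid\calF_{i-1}}$ form a bounded martingale difference sequence in $[-\eps,\eps]$ with conditional means at most $2\eps^2$, so Azuma--Hoeffding yields
\[\sum_{i=1}^k L_i\le 2k\eps^2+\sqrt{2k\ln(1/\delta')}\cdot\eps=\eps'\]
except with probability $\delta'$. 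Combining the two failure sources gives the advertised $(\eps',k\delta+\delta')$-DP.

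The main obstacle is the adaptive step: one has to verify that letting the adversary choose $\calM_i$ as a function of past outputs does not break the martingale structure. The key point is that for every realization of $\calF_{i-1}$, the mechanism $\calM_i$ is a \emph{fixed} $(\eps,\delta)$-DP channel on its input, so the single-step expectation and tail bounds hold uniformly over the history and can legitimately be invoked inside $\E{\cdot\mid\calF_{i-1}}$; the ``good event'' for round $i$ also depends on $\calF_{i-1}$ only through $\calM_i$, which is what keeps the union-bound accounting clean. Once this measurability/uniformity point is pinned down, the rest is routine assembly of the pieces above.
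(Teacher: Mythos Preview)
The paper does not prove this theorem at all: it is stated in the Preliminaries as a known result with a citation to \cite{DworkRV10}, and is then used as a black box in the analysis of \algref{alg:dp:alg}. So there is no ``paper's own proof'' to compare against.

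That said, your proposal is essentially the original Dwork--Rothblum--Vadhan argument and is sound in outline. A few small points worth tightening if you flesh it out: the equivalence you invoke between ``tail bound on cumulative privacy loss'' and approximate DP is one-sided (a tail bound gives DP, not conversely), which is all you need here; the single-step mean bound is usually stated as $\E{L}\le\eps(e^{\eps}-1)$ rather than your $\eps(e^{\eps}-1)/(e^{\eps}+1)$, and the derivation you sketch via $\ln(p/q)\le(p-q)/q$ does not directly produce either form without an additional symmetrization step; and the ``conditioning on the good event'' interacts with the martingale structure in a way that requires a bit of care (one clean route is to first replace each $(\eps,\delta)$-DP step by an $(\eps,0)$-DP step that is $\delta$-close in total variation, pay the $k\delta$ up front, and then run Azuma on genuinely bounded increments). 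None of these affect the high-level plan, which is correct and is exactly the approach of the cited reference.
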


\begin{theorem}[Generalization of DP, e.g.,~\cite{DworkFHPRR15,BassilyNSSSU21}]
\label{thm:generalization}
Let $\eps\in(0,1/3)$, $\delta\in(0,\eps/4)$, and $n\ge\frac{1}{\eps^2}\log\frac{2\eps}{\delta}$. 
Suppose $\calA:X^n\to 2^X$ is an $(\eps,\delta)$-differentially private algorithm that curates a database of size $n$ and produces a function $h:X\to\{0,1\}$. 
Suppose $\calD$ is a distribution over $X$ and $S$ is a set of $n$ elements drawn independently and identically distributed from $\calD$. 
Then
\[\PPPr{S\sim\calD,h\gets\calA(S)}{\left|\frac{1}{|S|}\sum_{x\in S}h(x)-\EEx{x\sim\calD}{h(x)}\right|\ge10\eps}<\frac{\delta}{\eps}.\]
\end{theorem}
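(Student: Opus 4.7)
The plan is to derive the generalization bound from differential privacy via the standard ``monitor'' / max-information argument, reducing the high-probability statement to a Chernoff estimate in a decoupled experiment in which the output function is statistically independent of the sample. Write $\mu_S(h) = \frac{1}{|S|}\sum_{x\in S}h(x)$ and $\mu_\calD(h) = \EEx{x\sim\calD}{h(x)}$, and let $B$ denote the bad event $\{(S,h) : |\mu_S(h) - \mu_\calD(h)| \geq 10\eps\}$ under the coupled distribution $S\sim\calD^n$, $h \gets \calA(S)$. The goal is to show $\PPr{B}\le \delta/\eps$.

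The first step is to decouple $S$ from $h$ by introducing an auxiliary database $T\sim\calD^n$ independent of $S$ and passing to the alternative experiment $h \gets \calA(T)$. The key ingredient is that $(\eps,\delta)$-differential privacy implies a bound on the $\delta'$-approximate max-information of $\calA$: for every event $E$,
\begin{equation*}
\PPPr{S\sim\calD^n,\, h=\calA(S)}{(S,h)\in E} \;\le\; 2^{\alpha}\cdot\PPPr{S,T\sim\calD^n,\, h=\calA(T)}{(S,h)\in E} + \delta',
\end{equation*}
with $\alpha = O\!\left(\eps^2 n + \eps\sqrt{n\log(1/\delta')}\right)$. Choosing $\delta'=\delta/(2\eps)$ together with the hypotheses $\delta\le\eps/4$ and $n\ge\eps^{-2}\log(2\eps/\delta)$ forces $\alpha = O(1)$, so the multiplicative blow-up is a constant. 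The proof of this bound views the $n$ coordinate-wise swaps transforming $S$ into $T$ as an adaptive composition of $n$ instances of an $(\eps,\delta)$-DP mechanism, and invokes the advanced composition theorem stated above: naive group privacy would yield a useless $2^{O(\eps n)}$ factor, but advanced composition gives an overall privacy loss scaling as $\eps\sqrt{n}$.

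The second step is the Chernoff estimate in the decoupled experiment. There, $h$ is independent of $S$, so conditional on any fixed $h$ the values $h(x_1),\dots,h(x_n)$ are i.i.d.\ $\{0,1\}$ random variables with mean $\mu_\calD(h)$, and Hoeffding's inequality yields $\PPr{(S,h)\in B \mid h} \le 2e^{-200\eps^2 n}$. The sample-size lower bound drives this below $\delta/(2\eps)$ by a wide margin. Plugging this into the previous display bounds $\PPr{B}$ in the original coupled experiment by $2^{O(1)}\cdot 2e^{-200\eps^2 n} + \delta/(2\eps) \le \delta/\eps$, as required.

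The main obstacle is the max-information step. The naive reduction via group privacy for databases differing in all $n$ coordinates yields the useless factor $2^{O(\eps n)}$; to extract the sharper $2^{O(\eps^2 n)}$ blow-up one must interpret the $n$ single-coordinate replacements as an adaptive $n$-fold composition of $(\eps,\delta)$-DP mechanisms and then control the concentration of the privacy-loss random variable via advanced composition. This is precisely where the quantitative hypotheses $\delta\le\eps/4$ and $n\ge\eps^{-2}\log(2\eps/\delta)$ are used; the remaining Hoeffding estimate and constants are routine bookkeeping.
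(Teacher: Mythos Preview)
The paper does not prove this theorem; it is stated in the preliminaries as a known result, cited to \cite{DworkFHPRR15,BassilyNSSSU21}, and used as a black box in the analysis of the robust algorithm. There is therefore nothing in the paper to compare your argument against.

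As to the merits of your sketch: the max-information route you outline is one of the standard ways to establish this type of generalization bound, and the two-step structure (bound the approximate max-information of an $(\eps,\delta)$-DP mechanism via advanced composition over the $n$ coordinate swaps, then apply Hoeffding in the decoupled experiment) is correct in spirit. A couple of cautions. First, the ``monitor'' technique and the max-information technique are related but not identical arguments in the literature; your write-up conflates them a bit, so if you flesh this out you should commit to one. Second, the sharp constants in the theorem as stated (the $10\eps$ deviation, the $\delta/\eps$ failure probability, and the exact sample-size condition $n\ge\eps^{-2}\log(2\eps/\delta)$) come from the direct expectation-based analysis of Bassily et al.\ rather than the max-information bound; the max-information route typically loses some constant factors relative to this statement, so your final inequality $2^{O(1)}\cdot 2e^{-200\eps^2 n}+\delta/(2\eps)\le\delta/\eps$ may not actually close with the stated constants once the $O(1)$ is made explicit. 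If you only need the result up to constants for the downstream application in this paper, your argument is fine; if you want the statement exactly as written, you should follow the cited references directly.
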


\section{Lower Bounds for Arbitrary-Order Streams}
In this section, we give space lower bounds for the experts problem on arbitrary-order streams. 
As a warm-up, we first show in~\autoref{sec:lb:acc} a general space lower bound for randomized algorithms when the best expert makes a ``small'' number of mistakes. 
We then give our main lower bound result in~\autoref{sec:lb:det}, showing that any deterministic algorithm achieving regret $R$ must use space $\Omega\left(\frac{nM}{RT}\right)$ when the best expert makes $M$ mistakes. 

\subsection{Warm-up: Lower Bound for Accurate Best Expert}
\label{sec:lb:acc}
In this section, we show that any randomized algorithm that achieves regret $R$ must use $\Omega\left(\frac{n}{RT}\right)$ space, even when the best expert makes $\Theta(RT)$ mistakes. 
In contrast, \cite{SrinivasWXZ22} give an $\Omega\left(\frac{n}{R^2T}\right)$ space lower bound:

\begin{theorem}[Memory lower bound; Theorem 1 of \cite{SrinivasWXZ22}]\label{thm:LowerBound}
Let $R>0$, $p<\frac{1}{2}$ be fixed constants, i.e., independent of other input parameters.  
Any algorithm that achieves $R$ regret for the experts problem with probability at least $1-p$ must use at least $\Omega\left(\frac{n}{R^2T}\right)$ space.

Furthermore, this lower bound holds even when the costs are binary, and expert predictions, as well as the correct answers, are constrained to be i.i.d.\ across the days, albeit with different distributions across the experts.
\end{theorem}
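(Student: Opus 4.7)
The plan is to reduce from the $\eps\shortdash\diffdist$ communication problem introduced in the overview of \autoref{thm:main-intro}, setting $\eps=\Theta(R)$ so that $M=\Theta(RT)$. In the YES case, a single hidden expert $L\in[n]$ is correct with probability $\tfrac{1}{2}+\eps$ on each day, while every other expert (and every expert in the NO case) is an unbiased coin. This is the standard streaming-to-communication template of \cite{SrinivasWXZ22}, paired with the hard distribution above.

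\textbf{Step 1 (reduction).} Map the $T$ players of the DiffDist instance to the $T$ days of the experts problem, with player $i$'s $n$-bit input giving the predictions of the $n$ experts on day $i$ against the constant correct answer $0$. The best expert in a YES instance has expected accuracy $\tfrac{1}{2}+\eps$, so any streaming algorithm with amortized regret $R$ achieves empirical accuracy at least $\tfrac{1}{2}+\eps-R-o(1)$. In a NO instance, a Chernoff plus union bound over experts shows that with probability $1-p/2$ no expert has accuracy above $\tfrac{1}{2}+O(\sqrt{(\log n)/T})$, so by the regret guarantee the algorithm cannot do better either. Choosing the constant in $\eps=\Theta(R)$ large enough separates the two cases by a constant gap in the algorithm's empirical accuracy, and hence the algorithm solves $\eps\shortdash\diffdist$ with failure probability at most $p+o(1)$.

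\textbf{Step 2 (communication lower bound and simulation).} Since $\calD^{(n)}_\NO=\zeta^n$ is a product distribution, the information-cost decomposition in \lemref{decompose} reduces the problem to $n$ copies of the single-coordinate detection problem: distinguish $T$ i.i.d.\ Bernoulli$(\tfrac{1}{2})$ bits from $T$ i.i.d.\ Bernoulli$(\tfrac{1}{2}+\eps)$ bits in the $T$-party one-way model. The distributed detection / Gap-Hamming framework of \cite{braverman2016communication} shows that this single-coordinate problem has constant-error randomized communication complexity $\Omega(1/\eps^2)$, so $\eps\shortdash\diffdist$ requires $\Omega(n/\eps^2)=\Omega(n/R^2)$ bits in total. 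Simulating a one-pass streaming algorithm of memory $S$ by forwarding its state between consecutive days yields a $T$-party one-way protocol of total communication $O(ST)$, whence $S=\Omega(n/(R^2T))$. The binary-cost and i.i.d.\ side condition of the theorem is satisfied by construction, since each expert's bit is i.i.d.\ across days and the correct answer is the constant $0$.

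\textbf{Main obstacle.} The only nontrivial ingredient is the $\Omega(1/\eps^2)$ single-coordinate bound against constant-error randomized protocols; this is more delicate than the high-probability bound sketched in the overview of \autoref{thm:main-intro}, because we cannot appeal to the ``informative index'' argument and instead must use the Hellinger-distance-based information-cost lower bound of \cite{braverman2016communication}, which we invoke as a black box rather than rederive.
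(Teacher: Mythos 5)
Your overall architecture — reduce from the $\eps\shortdash\diffdist$ problem with $\eps=\Theta(R)$, use the $\Omega(n/\eps^2)$ distributed detection bound of \cite{braverman2016communication} (which is exactly what \lemref{InfoOfDiffDist} packages), and simulate the $S$-bit streaming algorithm as a $T$-player one-way protocol of cost $O(ST)$ — is the same route the paper sketches and that \cite{SrinivasWXZ22} follow. Step 2 of your proposal is fine.

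The gap is in Step 1, precisely at the clause ``so by the regret guarantee the algorithm cannot do better either.'' A regret guarantee only \emph{lower}-bounds the algorithm's accuracy relative to the best expert; it places no upper bound on it. Since you fix the correct answer to the constant $0$ on every day, the trivial streaming algorithm that ignores the experts and outputs $0$ every day incurs zero cost, hence zero regret, on every input. It has empirical accuracy $1$ in both the YES and NO cases, so it satisfies the hypothesis of your reduction (it ``achieves $R$ regret with probability $1$'') yet conveys no information whatsoever about whether the instance is YES or NO. Your Step 1 would incorrectly conclude that this algorithm solves $\eps\shortdash\diffdist$. The paper explicitly flags this as ``a slight subtlety'' handled in \cite{SrinivasWXZ22} by a \emph{masking} argument: on each day, a fresh uniformly random bit is XORed into the correct answer and into every expert's prediction (equivalently, the correct answer is drawn uniformly at random and expert $j$ agrees with it with probability $\tfrac12$ or $\tfrac12+\eps$). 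This makes the marginal of the correct answer uniform and independent of the instance, so no expert-oblivious strategy can beat accuracy $\tfrac12+o(1)$, and the regret guarantee plus the Chernoff bound on the best NO-case expert together \emph{do} upper-bound the algorithm's accuracy in the NO case. You need to add this masking step to make the reduction sound.

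Two smaller slips: (i) given \defref{diff-dist}, the biased column is Bernoulli$(\tfrac12+\eps)$, i.e., biased toward $1$, so the unmasked correct answer should be $1$ rather than $0$ (or flip the bias); and (ii) with $\eps=\Theta(R)$ and $R<\tfrac12$, the best expert in the YES case makes $(\tfrac12-\eps)T=\Theta(T)$ mistakes, not $\Theta(RT)$ as you state — the $\Theta(RT)$-mistake regime is the separate padded construction the paper treats in its corollary.
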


The proof of this lower bound exploits a construction where the best expert makes $\Theta(T)$ mistakes. 
Thus, it is not clear how the space complexity of the problem behaves when the best expert makes a smaller number of mistakes. 
In fact, \cite{SrinivasWXZ22} also give an {algorithm} that uses $\widetilde{O}\left(\frac{n}{RT}\right)$ space when the best expert makes $O(RT)$ mistakes, bypassing the aforementioned lower bound. 

We now prove that in this small mistake regime, this algorithm is tight. Towards this goal, we first define the \(\eps\shortdash\diffdist\) problem that   reduces to the experts problem. It was proposed by \cite{SrinivasWXZ22} to prove  memory lower bounds for the expert problem in random order stream. 
\begin{definition}[The \(\eps\shortdash\diffdist\) Problem]
\label{def:diff-dist}
We have $T$ players, each of whom holds $n$ bits, indexed from $1$ to $n$.  We must distinguish between two cases, which we refer to as ``$V = 0$" and ``$V = 1$". Let \(\mu_0\) be a Bernoulli distribution with parameter \(\frac{1}{2}\), i.e., a fair coin, and let \(\mu_1\) be a Bernoulli distribution with parameter \(\frac{1}{2} + \eps\).
\begin{itemize}
\item (NO Case, ``$V = 0$") Every index for every player is drawn i.i.d. from a fair coin, i.e.,\ \(\mu_0\).
\item (YES Case, ``$V = 1$") An index $L\in[n]$ is selected arbitrarily---the $L$-th bit of each player is chosen i.i.d.\ from $\mu_1$. All other bits for every player are chosen i.i.d.\ from $\mu_0$.
\end{itemize}
\end{definition}
Any protocol that successfully solves the \(\eps\shortdash\diffdist\) problem with a constant probability greater than $\frac{1}{2}$ must use at least $\Omega\left(\frac{n}{\eps^2}\right)$ communication, a result due to \cite{SrinivasWXZ22}:
\begin{lemma}[Communication complexity of \(\eps\shortdash\diffdist\); Lemma 3 of \cite{SrinivasWXZ22}]\label{lemma:InfoOfDiffDist}
The communication complexity of solving the \(\eps\shortdash\diffdist\) problem with a constant \(1 - p\) probability, for any \(p \in [0, 0.5)\), is $\Omega\left(\frac{n}{\eps^2}\right)$.
\end{lemma}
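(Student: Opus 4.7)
The plan is to use the information-complexity framework of \cite{Bar-YossefJKS04}. By \lemref{ic-cc}, it suffices to lower bound $\IC_{\calD^{(n)}_\NO,p}(\diffdist)$ by $\Omega(n/\eps^2)$, taking the NO distribution as the hard input distribution.

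First I would exploit the product structure of $\calD^{(n)}_\NO$ via a direct-sum argument. View the input as a $T\times n$ matrix and let $C_j \in \{0,1\}^T$ denote the column at coordinate $j$ (the vector of bits that all $T$ players hold at position $j$). Under $\calD^{(n)}_\NO$, the columns are mutually independent and each is distributed as $\mu_0^{\otimes T}$. Applying \lemref{decompose} with single-column distribution $\mu_0^{\otimes T}$ yields
\[
I\bigl(C_1,\ldots,C_n \,;\, \Pi\bigr) \;\geq\; \sum_{j=1}^n I\bigl(C_j \,;\, \Pi\bigr),
\]
so it suffices to prove $I(C_j;\Pi)=\Omega(1/\eps^2)$ for every coordinate $j$.

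Next I would reduce the per-coordinate bound to the single-column distributed detection problem of \cite{braverman2016communication}. Fix $j\in[n]$. Observe that a successful \diffdist protocol must in particular distinguish $\calD^{(n)}_\NO$ (where $C_j\sim\mu_0^{\otimes T}$) from the YES distribution with planted index $L=j$ (where $C_j\sim\mu_1^{\otimes T}$, with all other columns independently $\mu_0^{\otimes T}$ in both distributions). Given a single-column input $C_j$, each of the $T$ players can simulate its share of the remaining columns $\tilde C_{-j}$ using public randomness drawn i.i.d.\ from $\mu_0$, and then execute $\Pi$ on $(C_j,\tilde C_{-j})$. The resulting single-column protocol $\Pi'$ distinguishes $\mu_0^{\otimes T}$ from $\mu_1^{\otimes T}$ with advantage $1-2p=\Omega(1)$, and since $C_j$ and $\tilde C_{-j}$ are independent, its information cost equals $I(C_j;\Pi)$. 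Invoking the single-column distributed detection lower bound of \cite{braverman2016communication}, this information cost is $\Omega(1/\eps^2)$. Summing across $j$ yields the claimed $\Omega(n/\eps^2)$ total-communication bound.

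The main obstacle is the underlying single-column distributed detection lower bound: since $\mathrm{KL}(\mu_0\,\|\,\mu_1)=\Theta(\eps^2)$, a naive Pinsker-based argument only yields an $\Omega(1/\eps)$ bound, and obtaining the tight $\Omega(1/\eps^2)$ requires the Hellinger-distance-based information-complexity technique of \cite{braverman2016communication}. However, that is precisely the black-box tool we import; the remaining ingredients (the direct-sum decomposition and the public-coin embedding of a single-column instance into \diffdist) are routine applications of the machinery already recalled in the preliminaries.
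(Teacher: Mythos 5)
The paper does not prove this lemma at all; it is imported verbatim as Lemma 3 of \cite{SrinivasWXZ22}, so there is no in-paper argument to compare your proposal against. That said, your sketch is indeed the standard information-complexity route one would expect to find in \cite{SrinivasWXZ22}: apply \lemref{ic-cc} to pass to information complexity, apply \lemref{decompose} to the product hard distribution $\calD^{(n)}_\NO = \mu_0^{\otimes Tn}$ to reduce to a per-column bound, and then reduce each column to the one-column distributed-detection problem of \cite{braverman2016communication}, whose $\Omega(1/\eps^2)$ Hellinger-based bound you import. This is structurally the same skeleton the paper uses in \secref{lb:det} (its \lemref{direct-sum} plus a single-coordinate lemma), just with a different single-coordinate ingredient.

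One step needs tightening. You simulate the remaining $n-1$ columns using \emph{public} randomness, and then assert the information cost of $\Pi'$ equals $I(C_j;\Pi)$. If $\tilde C_{-j}$ is public, the correct accounting is $I(C_j;\Pi,\tilde C_{-j}) = I(C_j;\Pi\mid \tilde C_{-j})$, which is at least $I(C_j;\Pi)$ but not equal to it, so the single-column lower bound would land on $I(C_j;\Pi\mid C_{-j})$ rather than the quantity $\sum_j I(C_j;\Pi)$ that \lemref{decompose} hands you; the inequality points the wrong way. The fix is to sample each player's share of the missing columns using \emph{private} randomness, which is permissible because $\calD^{(n)}_\NO$ is fully product across both rows and columns (each player $t$ can independently draw its own $n-1$ spare bits from $\mu_0$). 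Private randomness is marginalized out of the information cost, so you genuinely get $I(C_j;\Pi') = I(C_j;\Pi)$, and the rest of your argument goes through. This is a local fix rather than a flaw in the overall plan.
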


The proof of~\autoref{thm:LowerBound} by \cite{SrinivasWXZ22} uses $n$ coin flips across each of the $T$ players to form the $n$ expert predictions over each of the $T$ days. 
In the NO case, each   expert will be correct on roughly $\frac{T}{2}$ days, while in the YES case, a single expert will be correct on roughly $\frac{T}{2}+\eps T$ days, so that an algorithm with regret $R=O(\eps)$ will be able to distinguish between the two cases. 
There is a slight subtlety in the proof that uses a masking argument to avoid ``trivial'' algorithms that happen to succeed on a ``lucky'' input, but for the purposes of our proof in this section, the masking argument is not needed. 
It then follows that the total communication is $\Omega\left(\frac{n}{R^2}\right)$ across the $T$ players, so that any streaming algorithm must use at least $\Omega\left(\frac{n}{R^2T}\right)$ bits of space. 

Suppose we instead consider the \(\eps\shortdash\diffdist\) problem over $RT$ players, representing $RT$ days in the experts problem. 
Moreover, suppose we set $\eps=\Theta(1)$ in the \(\eps\shortdash\diffdist\) problem, so that in the NO case, each of the experts will be correct on roughly $\frac{RT}{2}$ days, while in the YES case, a single expert will be correct on roughly $\frac{RT}{2}+CRT$ days, for some constant $C>0$. 
Suppose we further pad all of the experts with incorrect predictions across an additional $T-RT$ days, so that the total number of days is $T$, but the number of correct expert predictions remains the same. 
Then an algorithm achieving regret $O(R)$ will be able to distinguish between the two cases, so that the total communication is $\Omega\left(\frac{n}{R}\right)$, so that any streaming algorithm must use at least $\Omega\left(\frac{n}{RT}\right)$ bits of space. 

\begin{corollary}
Let $R$, $p<\frac{1}{2}$ be fixed constants, i.e., independent of other input parameters.  
Any algorithm that achieves $R$ regret for the experts problem with probability at least $1-p$ must use at least $\Omega\left(\frac{n}{RT}\right)$ space even when the best expert makes as few as $\Theta(RT)$ mistakes. 
This lower bound holds even when the costs are binary and expert predictions, as well as the correct answer, are constrained to be i.i.d.\ across the days, albeit with different distributions across the experts.
\end{corollary}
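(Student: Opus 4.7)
The plan is to follow the reduction from $\eps\shortdash\diffdist$ to the experts problem used in the proof of Theorem~\ref{thm:LowerBound}, but with two modifications: the $\eps\shortdash\diffdist$ instance is instantiated on $RT$ players (one per ``active'' day) rather than $T$ players, and $\eps$ is taken to be a small positive constant independent of $n, R, T, M$ rather than being tied to $R$. Under these parameters, Lemma~\ref{lemma:InfoOfDiffDist} gives a communication lower bound of $\Omega(n / \eps^2) = \Omega(n)$ bits for any constant-error protocol.

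For the reduction, I would interpret each of the $RT$ players as corresponding to a single day of the experts stream and their $n$ bits as the predictions of the $n$ experts on that day, with the ground-truth outcome fixed to $1$ so that a bit value of $1$ records a correct prediction. After these $RT$ days, I append $T - RT$ ``padding'' days on which every expert makes an incorrect prediction, so that the total stream length is exactly $T$. In the NO case, a Chernoff bound shows that each expert is correct on $RT/2 \pm o(RT)$ of the active days, so the best expert makes $\Theta(RT)$ mistakes overall. In the YES case, the distinguished expert $L \in [n]$ is correct on $(1/2 + \eps)RT \pm o(RT)$ days, while every other expert is correct on $RT/2 \pm o(RT)$ days, so the best expert again makes $\Theta(RT)$ mistakes. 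Thus the hypothesis of the corollary, that the best expert makes $\Theta(RT)$ mistakes, is satisfied in both cases.

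Given any streaming algorithm using space $s$ and achieving amortized regret at most $R$ with probability at least $1 - p$, its total number of mistakes is at most $M + RT$ where $M$ is the best expert's mistake count. Choosing $\eps$ to be a fixed constant larger than $R$ (possible since both are absolute constants), the expected mistake counts in the YES and NO cases differ by $\Omega(\eps RT) = \Omega(RT)$, and a simple threshold test on the observed mistake count distinguishes the two cases with constant probability. The $RT$ players can jointly simulate the streaming algorithm by sequentially passing its $s$-bit memory state, giving a protocol of communication $O(s \cdot RT)$; combined with the $\Omega(n)$ lower bound, this forces $s = \Omega(n / (RT))$.

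The main subtlety will be the calibration of constants. The constant $\eps$ must be small enough that $(1/2 + \eps)$-biased coins still produce $\Theta(RT)$ mistakes for the distinguished expert (so the ``$M = \Theta(RT)$'' hypothesis applies), yet sufficiently larger than $R$ that the YES/NO mistake gap is not swallowed by the regret allowance. A masking argument analogous to the one alluded to after Theorem~\ref{thm:LowerBound}, for instance randomly flipping the correct answer and all expert predictions on each padding day, may be needed to rule out ``lucky'' algorithms that happen to exploit the deterministic structure of the padded portion of the stream; with $R$ and $p$ fixed and $\eps$ chosen as an absolute constant, all constraints are simultaneously satisfiable and the argument goes through as a bookkeeping modification of the proof of Theorem~\ref{thm:LowerBound}.
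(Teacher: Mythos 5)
Your proposal matches the paper's own proof essentially step for step: instantiate $\eps\shortdash\diffdist$ over $RT$ players with $\eps=\Theta(1)$, pad the remaining $T-RT$ days with all-wrong predictions, apply Lemma~\ref{lemma:InfoOfDiffDist} to get an $\Omega(n)$ communication bound, and divide over the $RT$ rounds to force $\Omega(n/(RT))$ space. The one small point worth tightening in your write-up is that the best expert's mistake count must include the $T-RT$ padding days (so it is $\approx T-RT/2$, which is $\Theta(RT)$ only because $R$ is a fixed constant), and the paper explicitly notes that the masking argument is \emph{not} needed here, whereas you leave that open.
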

\begin{proof}
The claim follows from setting $T=RT$ and $R=\Theta(1)$ in the proof of~\autoref{thm:LowerBound}. 
\end{proof}

\subsection{Lower Bound for Deterministic Algorithms}
\label{sec:lb:det}
We now prove our main space lower bound for deterministic algorithms (\autoref{thm:main-intro}). 
We first set up some basic notations and introduce a hard distribution.

Let $T$ be any fixed positive integer. Let $\calD^{(n)}_\NO$ be the distribution over matrices $A$ with size $T\times n$ such that all entries of the matrix are i.i.d.\ Bernoulli with parameter $\frac{1}{2}$, i.e., each entry of $A$ is $0$ with probability $\frac{1}{2}$ and $1$ with probability $\frac{1}{2}$. 
Let $\calD^{(n)}_\YES$ be the distribution over matrices $M$ with size $T\times n$ such that there is a randomly chosen column $L\in[n]$, which is i.i.d.\ Bernoulli with parameter $\left(1-\frac{M}{T}\right)$ and all other columns are i.i.d.\ Bernoulli with parameter $\frac{1}{2}$. 
Let $\biasdetect_n$ be the problem of detecting whether $A$ is drawn from $\calD^{(n)}_\YES$ or $\calD^{(n)}_\NO$. 

Let $\Pi$ be a communication protocol for $\biasdetect_n$  that is correct with probability at least $1-\exp(-\Theta(T))$.
Since $\calD^{(n)}_\NO$ is a product distribution across columns, then it can be written as $\zeta^n$, where $\zeta$ is the distribution over a single column such that all entries of the column are i.i.d.\ Bernoulli with parameter $\frac{1}{2}$. 
Let  $\biasdetect_1$ denote the problem  of distinguishing between $\calD^{(1)}_\NO$ and $\calD^{(1)}_\YES$ on a single column, i.e., $n=1$. Using  $\calD^{(n)}_\NO$ as the hard distribution,  we have the following direct sum theorem.
\begin{lemma}[Direct sum for \textsc{BiasDetect}]\label{lem:direct-sum}
The information complexity of $\biasdetect_n$  satisfies 
\[\IC_{\calD_\NO^{(n)},2^{-\Theta(T)}}(\biasdetect_n)\ge n\cdot \IC_{\calD_\NO^{(1)},2^{-\Theta(T)}}(\biasdetect_1).\]
\end{lemma}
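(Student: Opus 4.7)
The plan is to follow the standard embedding strategy for direct sum in information complexity, adapted to our hard distribution $\calD_\NO^{(n)}$. First, I will invoke \lemref{decompose}: since $\calD_\NO^{(n)} = \zeta^n$ is a product distribution across columns, any protocol $\Pi$ for $\biasdetect_n$ with success probability at least $1-2^{-\Theta(T)}$ satisfies
\[
I(X_1,\ldots,X_T;\Pi(X_1,\ldots,X_T)) \;\ge\; \sum_{i=1}^n I(X_{1,i},\ldots,X_{T,i};\Pi(X_1,\ldots,X_T)),
\]
where $X_{j,i}$ denotes the $i$-th bit held by player $j$ and the overall input is drawn from $\calD_\NO^{(n)}$. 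The goal is then to bound each term on the right-hand side from below by $\IC_{\calD_\NO^{(1)},2^{-\Theta(T)}}(\biasdetect_1)$.

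To do so, for each fixed column index $i\in[n]$ I will construct an embedded protocol $\Pi_i$ for $\biasdetect_1$ as follows. Given an input column $C=(C_1,\ldots,C_T)$, each player $j$ uses private randomness to generate a row of $n-1$ i.i.d.\ fair coins and then inserts $C_j$ into coordinate $i$ to form its input to $\Pi$; the players then simulate $\Pi$ on this composite input and output whatever $\Pi$ outputs. The crucial observation is that this embedding respects both distributions: if $C\sim\calD_\NO^{(1)}$, the simulated input is exactly $\calD_\NO^{(n)}$; and if $C\sim\calD_\YES^{(1)}$, the simulated input is precisely $\calD_\YES^{(n)}$ with the distinguished column $L=i$. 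Consequently $\Pi_i$ solves $\biasdetect_1$ with exactly the same success probability as $\Pi$, namely $1-2^{-\Theta(T)}$.

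Next, I will compute the information cost of $\Pi_i$ under $\calD_\NO^{(1)}$. Since the embedded composite input is distributed exactly as $\calD_\NO^{(n)}$ with the $i$-th column identified with $C$, and the transcript of $\Pi_i(C)$ is by construction the transcript of $\Pi$ applied to the composite input, we get
\[
I\bigl(C_1,\ldots,C_T;\Pi_i(C_1,\ldots,C_T)\bigr) \;=\; I\bigl(X_{1,i},\ldots,X_{T,i};\Pi(X_1,\ldots,X_T)\bigr),
\]
where $X\sim\calD_\NO^{(n)}$ on the right. By the definition of information complexity and \lemref{ic-cc}-style minimization, the left-hand side is at least $\IC_{\calD_\NO^{(1)},2^{-\Theta(T)}}(\biasdetect_1)$. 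Summing over $i\in[n]$ and chaining through the decomposition inequality yields
\[
I(X_1,\ldots,X_T;\Pi(X_1,\ldots,X_T)) \;\ge\; n\cdot \IC_{\calD_\NO^{(1)},2^{-\Theta(T)}}(\biasdetect_1),
\]
and taking the infimum over all valid protocols $\Pi$ gives the claimed bound.

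The proof is mostly bookkeeping once the embedding is set up; the only step that requires minor care is verifying that the simulation of $\Pi_i$ uses only private randomness per player (so it is a legitimate protocol in the same communication model) and that the error guarantee transfers perfectly—this is exactly why the embedding must send YES instances to YES instances and NO to NO. I do not anticipate a technical obstacle, since the distributional matching is exact by construction of $\calD_\NO^{(n)}$ and $\calD_\YES^{(n)}$.
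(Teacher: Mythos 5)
Your proof is correct and follows essentially the same route as the paper's: invoke the information cost decomposition lemma of Bar-Yossef et al.\ (\lemref{decompose}), which applies because $\calD_\NO^{(n)}=\zeta^n$ is a product distribution, and then argue that each per-column mutual information term dominates $\IC_{\calD_\NO^{(1)},2^{-\Theta(T)}}(\biasdetect_1)$. The paper's own proof is a one-liner that cites \lemref{decompose} and states the result follows; you have spelled out the embedding step (players privately fill in the other $n-1$ fair-coin columns so that YES maps to YES with $L=i$ and NO maps to NO, and the transcript distribution is preserved) that the paper leaves implicit, which is exactly the right thing to check and is carried out correctly.
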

\begin{proof}
    By definition, $\calD_\NO^{(n)}  = \zeta^n$ is a product distribution over $n$ columns. The lemma follows from   the standard direct sum lemma of information cost (\autoref{lem:decompose}).
\end{proof}



With the above direct sum theorem for $\biasdetect_n$, it now suffices to provide a single-coordinate information cost lower bound against $\biasdetect_1$.  The proof is delayed to  Section \ref{sec:key-lemma-pf}.
 \begin{lemma}[Single-coordinate information cost lower bound]\label{lem:single-lb}
Let $c\in (0,1)$ and $\Pi$ be any protocol with  error $\delta = 2^{-\Theta(T)}$ for $\biasdetect_1$.  We have that the information cost of $\Pi$ with respect to $\zeta$ is at least 
\begin{equation}\label{eqn:single-coord-ic}
    I (\Pi (C_1,C_2,\ldots , C_T); C_1,C_2,\ldots, C_T) \geq \Omega \left(  M\right),
\end{equation}
where the bits $C_i \sim \zeta$ are i.i.d.\ single coordinates.
\end{lemma}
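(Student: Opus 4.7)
The plan is to execute the program sketched just after the statement: decompose the information cost into per-coordinate informativeness contributions, then argue by contradiction that if this sum is too small, the protocol cannot succeed on the YES distribution. For the decomposition, I would use that in the one-way model each message $M_j$ is a function of $(C_j, M_{<j})$ and the $j$-th player's private randomness, so $M_j \perp C_{-j} \mid (C_j, M_{<j})$. Combined with the chain rule and the fact that $C_j \sim \mathrm{Bern}(1/2)$ is independent of $M_{<j}$,
\[I(\Pi; C_1, \ldots, C_T) \;=\; \sum_{j=1}^{T} I(M_j; C_j \mid M_{<j}) \;=\; \sum_{j=1}^T \bigl(1 - H(C_j \mid M_j, M_{<j})\bigr).\]
Following the overview, I would call $(M_j, M_{<j})$ \emph{informative} when $|\Pr(C_j = 0 \mid M_j, M_{<j}) - \Pr(C_j = 1 \mid M_j, M_{<j})| \geq c$ for a small constant $c \in (0,1)$; on any informative realization \autoref{lem:entropy-bound} bounds $H(C_j \mid M_j, M_{<j}) \leq \eta < 1$. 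Taking expectations then gives $I(M_j; C_j \mid M_{<j}) \geq (1-\eta)\, p_j$, where $p_j$ is the probability that $(M_j, M_{<j})$ is informative, so the goal reduces to showing $\sum_j p_j = \Omega(M)$.

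\textbf{Bit-flipping argument.} Suppose for contradiction that $\sum_j p_j \leq \gamma M$ for a small $\gamma > 0$. By Markov's inequality the uninformative set $S = S(C,R)$ satisfies $|S| \geq T - 10\gamma M$ with probability $\tfrac{9}{10}$, and combined with the $\geq 1-\delta$ NO-output probability under $\calD^{(1)}_\NO$ there is a positive-probability realization $(C,R)$ producing a NO-transcript $\pi$ with this many uninformative coordinates. For any $C^\star$ agreeing with $C$ on $[T]\setminus S$, Bayes' rule at each $j \in S$ (using the fair-coin prior on $C_j$) yields
\[\frac{\Pr[M_j = \pi_j \mid C^\star_j, M_{<j} = \pi_{<j}]}{\Pr[M_j = \pi_j \mid C_j, M_{<j} = \pi_{<j}]} \;\geq\; \frac{1-c}{1+c},\]
and chaining across all $|S|$ uninformative coordinates gives $\Pr[\Pi(C^\star) = \pi] \geq e^{-\beta c T}\, \Pr[\Pi(C) = \pi]$ for an absolute constant $\beta$. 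Since $|S| \geq T/2$, an averaging argument over the $2^{|S|}$ completions of $S$ produces a specific $C^\star$ with fewer than $M/2$ total zeros for which this bound still holds up to a factor of $2$.

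\textbf{Closing the contradiction.} Under $\calD^{(1)}_\YES$ the probability of producing this specific $C^\star$ is at least $(M/T)^{M/2}(1 - M/T)^T = e^{-O(M)}$, so the joint probability that a $\calD^{(1)}_\YES$-input equals $C^\star$ and the protocol outputs NO is at least $e^{-O(M)} \cdot e^{-\beta c T} \cdot \Pr[\Pi(C) = \pi]$. Summing this bound over a typical set of NO-transcripts $\pi$ (whose aggregate NO-mass under $\Pi$ is $\Omega(1)$), I would conclude that $\Pi$ errs on the YES distribution with probability exceeding $\delta = e^{-\Theta(T)}$, once $c$ and $\gamma$ are chosen small enough relative to the constant hiding in $\delta$. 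This contradicts the $\delta$-error guarantee and forces $\sum_j p_j = \Omega(M)$. The main obstacle is precisely this final calibration: simultaneously balancing the bit-flip cost $e^{-\beta c T}$, the YES-mass penalty $e^{-O(M)}$, and the error slack $e^{-\Theta(T)}$, while ensuring that the averaging step over typical NO-transcripts does not lose more than a constant factor in the aggregated mass.
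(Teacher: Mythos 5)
Your proposal follows the paper's proof (\lemref{ic-sum-pj} and \lemref{sum-pj}) essentially step for step: the chain-rule/one-wayness decomposition into $\sum_j I(M_j;C_j\mid M_{<j})$, the informativeness threshold and binary-entropy bound giving $\Omega\bigl(\sum_j p_j\bigr)$, and the bit-flipping contradiction via Markov's inequality, a large uninformative set $S$, and a choice of $C^\star$ with fewer than $M/2$ zeros. The one concrete slip is the YES-mass estimate $(M/T)^{M/2}(1-M/T)^T = e^{-O(M)}$: since $(M/T)^{M/2} = e^{-\frac{M}{2}\ln(T/M)}$ can be as small as $e^{-T/(2e)}$ when $M\ll T$, the correct order is $e^{-\Theta(T)}$, not $e^{-O(M)}$. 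This does not overturn the conclusion, because the error budget $\delta = 2^{-\Theta(T)}$ can absorb an $e^{-\Theta(T)}$ penalty, but it does make the final constant calibration tighter than your estimate suggests; the transcript-averaging subtlety you flag as the main obstacle is precisely the point where the paper's own write-up is also informal.
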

Combining \autoref{lem:single-lb} with the direct sum theorem (\autoref{lem:direct-sum}), we immediately 
get the following information complexity lower bound for $\biasdetect_n$:
\begin{theorem}[$n$-Coordinate information complexity lower bound]
\label{thm:n:lb}
Let $c \in (0,1)$. Then $$\IC_{\calD_\NO^{(n)},2^{-\Theta(T)}}(\biasdetect_n)=\Omega (  nM) . $$
\end{theorem}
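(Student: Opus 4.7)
The plan is to derive the theorem as an immediate consequence of the two lemmas that have just been established, namely the direct sum for \textsc{BiasDetect} (\lemref{direct-sum}) and the single-coordinate information cost lower bound (\lemref{single-lb}). No new technical ingredients are needed at this stage; all of the real work has already been distilled into those two statements, so the proof should be essentially a one-line chain of inequalities.

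In more detail, the first step is to invoke \lemref{direct-sum}, which exploits the fact that the hard distribution $\calD_\NO^{(n)}$ factorizes as the product $\zeta^n$ over columns and applies the information cost decomposition lemma (\lemref{decompose}) to give
\[
\IC_{\calD_\NO^{(n)},\, 2^{-\Theta(T)}}(\biasdetect_n) \;\ge\; n \cdot \IC_{\calD_\NO^{(1)},\, 2^{-\Theta(T)}}(\biasdetect_1).
\]
The second step is to substitute the single-coordinate bound from \lemref{single-lb}, which asserts that any protocol solving $\biasdetect_1$ with error $2^{-\Theta(T)}$ must have information cost at least $\Omega(M)$ with respect to $\zeta = \calD_\NO^{(1)}$. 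Plugging this into the right-hand side above yields the claimed $\Omega(nM)$ lower bound on $\IC_{\calD_\NO^{(n)},\, 2^{-\Theta(T)}}(\biasdetect_n)$.

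Because the theorem is purely a corollary of two earlier results, there is no real obstacle here; the main subtlety is merely to ensure that the hypotheses of \lemref{direct-sum} and \lemref{single-lb} are stated with matching failure probabilities ($2^{-\Theta(T)}$) and with respect to compatible distributions ($\calD_\NO^{(n)}$ on the $n$-coordinate side and $\zeta = \calD_\NO^{(1)}$ on the single-coordinate side), both of which hold by construction. The genuine difficulty in the lower bound is hidden inside the proof of \lemref{single-lb} itself, which (per the earlier discussion in the introduction) argues via an informative/uninformative dichotomy on indices and a Markov-type averaging argument that deriving an informative message about $C_i$ on too few coordinates would allow modification of the input on the uninformative positions to fool the protocol on a YES instance of probability $2^{-T}$.
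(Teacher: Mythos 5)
Your proof is correct and follows exactly the same route as the paper: apply the direct sum statement (\lemref{direct-sum}) to reduce to the single-coordinate problem, then substitute the $\Omega(M)$ bound from \lemref{single-lb}. Your remarks about matching failure probabilities and distributions are accurate and consistent with how the paper sets things up.
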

\begin{proof}
    This follows by applying the direct sum theorem  (\autoref{lem:direct-sum}) to the single-coordinate bound   \autoref{lem:single-lb}.
\end{proof}
This implies that  any  algorithm  with $R$ regret and  success rate at least $1-2^{-\Theta(T)}$ requires $\Omega \left(\frac{nM}{RT}\right)$ memory, where $M$ is the mistake bound on the best expert.

\begin{theorem}[Memory lower bound for expert learning]
\label{thm:lb:random}
Let $R,M$ be fixed and independent of other input parameters. 
Any streaming algorithm that achieves $R$ regret for the experts problem with probability at least $1-2^{-\Theta(T)}$ must use at least $\Omega(\frac{nM}{RT})$ space, for $n=o\left(2^T\right)$, where the best expert makes $M$ mistakes.\end{theorem}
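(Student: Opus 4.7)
The plan is to reduce $\biasdetect_n$ to the streaming expert-learning problem and invoke the information-complexity lower bound from~\autoref{thm:n:lb}, so that any streaming algorithm $\calA$ using $s$ bits of space yields a $T$-player communication protocol whose total communication is at most $O(Ts)$.

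First I would construct the reduction. Player $i\in[T]$ holds the $i$-th row $b_i\in\{0,1\}^n$ of the $\biasdetect_n$ input matrix, interpreted as which experts are ``correct'' on day $i$. Using public randomness, the players agree on a uniform outcome bit $z_i\in\{0,1\}$ for each day, and define the expert predictions shown to $\calA$ by $p_{i,j}=z_i$ if $b_{i,j}=1$ and $p_{i,j}=1-z_i$ if $b_{i,j}=0$, so that expert $j$ is correct on day $i$ iff $b_{i,j}=1$. Player $i$ receives $\calA$'s $s$-bit state after day $i-1$, simulates day $i$ (feeding in the predictions $p_{i,\cdot}$, then revealing $z_i$ after $\calA$ outputs its prediction $y_i$), and forwards the updated state to player $i+1$. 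The final player outputs \YES or \NO according to whether $\calA$'s cumulative number of mistakes falls below a chosen threshold.

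Next I would verify correctness. In the \YES case, the designated expert $L$ is correct with probability $1-M/T$ on each day, so by a Chernoff bound it makes close to $M$ mistakes with probability $1-2^{-\Theta(T)}$; the regret guarantee then forces $\calA$ to make at most $M+RT+o(T)$ mistakes with the same probability. In the \NO case, every $p_{i,j}$ is a uniform bit independent of $z_i$, so $\calA$'s prediction $y_i$ is independent of $z_i$ and its total mistake count concentrates to $T/2\pm O(\sqrt{T\log(1/p)})$. Placing the decision threshold between these two values distinguishes the cases with failure probability $2^{-\Theta(T)}$ provided $M+RT$ is sufficiently smaller than $T/2$. Combining the $O(Ts)$ communication upper bound of the reduction with the $\Omega(nM)$ information-complexity lower bound from~\autoref{thm:n:lb} (through~\autoref{lem:ic-cc}) then gives the space bound. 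The $1/R$ factor in the claimed $\Omega\!\left(\tfrac{nM}{RT}\right)$ is extracted by refining the construction as in~\autoref{sec:lb:acc}: I would embed the $\biasdetect_n$ hard instance into only $\Theta(RT)$ ``active'' days and pad the remaining $T-\Theta(RT)$ days with trivial inputs, so that the effective number of rounds contributing to the information cost scales with $\Theta(RT)$ rather than $T$.

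The main obstacle is the concentration analysis needed so that the mistake-count distinguisher fails with probability at most $2^{-\Theta(T)}$: the \YES/\NO\ gap must dominate an $O(\sqrt{T\log(1/p)})=O(T)$-scale deviation at $p=2^{-\Theta(T)}$. Calibrating the Bernoulli bias, the active-day window, and the threshold so that distinguishing remains reliable for the full stated range of $R$ and $M$---while capturing the $\Omega(nM/(RT))$ form rather than the weaker $\Omega(nM/T)$ bound that one gets from the unoptimized reduction---is the key technical step.
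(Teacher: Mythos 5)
Your proposal is correct and follows essentially the same route as the paper's proof: reduce $\biasdetect_n$ on $\Theta(RT)$ players to the streaming algorithm (passing the $s$-bit memory state as the message), apply the $\Omega(nM)$ information-complexity bound of \autoref{thm:n:lb} via \autoref{lem:ic-cc}, and distinguish YES/NO by thresholding the mistake count, with the union bound over $n$ columns supplying the $n = o(2^T)$ hypothesis. The paper carries out the concentration/threshold calibration that you flag as the remaining technical step, but the reduction, the padding to $\Theta(RT)$ active days to extract the $1/R$ factor, and the appeal to the direct-sum information bound are all the same.
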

\begin{proof}
We now consider the problem $\biasdetect_n$ on a matrix of size $RT\times n$. 
Note that in the NO case,  at  any fixed column $i\in[n]$, the probability that there are more than $\frac{3RT}{5}-\frac{M}{2}$ instances of $0$, for $M\le\frac{RT}{8}$, is at most $2\exp(-c_1RT)$, for a sufficiently small constant $c_1\in(0,1)$.  
Thus, by a union bound, the probability that there exists an index $i\in[n]$ with more than $\frac{3RT}{4}-\frac{M}{2}$ instances of $0$ is at most $2n\exp(-c_1RT)$. 

Similarly in the YES case, the probability that there are fewer than $\frac{4RT}{5}-\frac{M}{2}$ instances of $0$ for a fixed $i\in[n]$ and for $M\le\frac{RT}{8}$ is at most $2\exp(-c_2RT)$, for a sufficiently small constant $c_2\in(0,1)$ and so by a union bound, the probability that there exists an index $i\in[n]$ with fewer than $\frac{3RT}{4}-\frac{M}{2}$ instances of $0$ is at most $2n\exp(-c_2RT)$. 
Hence, for $n=o(2^T)$, there exists a constant $c\in(0,1)$ such that any algorithm that achieves total regret at most $\frac{RT}{5}$ with probability at least $1-\exp(-cT)$ can distinguish between the YES and NO cases with probability $1-\exp(-\Theta(T))$. 

By~\autoref{thm:n:lb} and \autoref{lem:ic-cc}, the total communication across the $RT$ players must be at least $\Omega(nM)$. 
Therefore, any streaming algorithm that achieves average $R$ regret for the experts problem with probability at least $1-2^{-\Theta(T)}$ must use at least $\Omega(\frac{nM}{RT})$ space.
\end{proof}

\subsection{Proof of the Single-Coordinate Information Cost Lower Bound}\label{sec:key-lemma-pf}
We now show the single-coordinate lower bound of \autoref{lem:single-lb}.
\begin{proof}[Proof of \autoref{lem:single-lb}]
Consider a protocol that is correct with probability $1-2^{-\Theta(T)}$ and let $(C_1,C_2,\ldots , C_T) \sim \zeta^T$ be a single column drawn from the NO case, where each coordinate is i.i.d.\ Bernoulli with parameter $1/2$. For notational convenience, let $\Pi = \Pi (C_1,\cdots , C_T)$ denote the transcript given the input $(C_1,C_2,\cdots , C_T)$.   We consider the one-way message-passing model, where each player $P_i$ holds the input $C_i$. For all $i<T$, let $M_i$ denote the message sent from player $P_i$ to player $P_{i+1}$.   

By the chain rule of mutual information, the information cost of the transcript,  the left-side of \autoref{eqn:single-coord-ic} that we need to bound, can be written as 
\begin{align}
     I(\Pi; C_1,C_2,\ldots, C_T) =  \sum_{j= 1}^{T} I\left(M_j ;  C_1,C_2,\ldots, C_T \mid  M_{<j}\right).
\end{align}
By the independence of one-way communication,  we have
\begin{align}
    I\left(M_j ;  C_1,C_2,\ldots, C_T \mid  M_{<j}\right) = I\left(M_j ;  C_j \mid  M_{<j}\right).
\end{align}
Combining the two equalities above, the information cost equals
\begin{equation}\label{eqn:ic-24}
     I(\Pi; C_1,C_2,\ldots, C_T) = \sum_{j=1}^T  I\left(M_j ;  C_j \mid  M_{<j}\right).
\end{equation}
We  now lower bound the right-side. First, we make the following definition.
For any $i \in [T]$, we say that $(M_i, M_{< i})$ is \textit{informative} for $i$ with respect to the input $C$ and the transcript $\Pi=(M_1,\ldots,M_T)$ if 
\begin{equation}
   \left| \Pr (C_i = 0  \mid M_i, M_{< i} ) -  \Pr (C_i = 1  \mid M_i, M_{< i} ) \right| \geq c
\end{equation}
for some constant $c > 0$; and uninformative otherwise. 
Intuitively, an informative index $i$ with respect to $(M_i, M_{< i})$  means that conditional on the past messages $ M_{< i}$,  the message $M_i$ reveals much information about $C_i$.  
Hence, in this case, $I(M_i, C_i \mid M_{< i})$ would be large. 
Now for all $i\in [T]$, let $p_i $ be the probability that $(M_i, M_{< i})$ is  informative (for $i$ with respect to $C$ and $\Pi$). 

Conceptually, we   need to show that $\sum_i p_i$ is large, since then there would be sufficiently many informative  messages, and so the information cost in the left-side of \autoref{eqn:ic-24} is high. We formalize this idea in the following lemma.
\begin{lemma}\label{lem:ic-sum-pj}
In the setting above, where $c >0$ is a constant, the information cost can be lower bounded by 
\begin{align}\label{eqn:sum-ic-26}
  I(\Pi; C_1,C_2,\ldots, C_T) = \sum_{j=1}^T  I\left(M_j ;  C_j \mid  M_{<j}\right) \geq \Omega \left(\sum_{j=1}^T p_j\right)
\end{align}
\end{lemma}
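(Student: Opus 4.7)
The plan is to bound each term $I(M_j; C_j \mid M_{<j})$ in the decomposition~\eqref{eqn:ic-24} individually in terms of $p_j$, and then sum. Writing $I(M_j; C_j \mid M_{<j}) = H(C_j \mid M_{<j}) - H(C_j \mid M_j, M_{<j})$, I will first argue that $H(C_j \mid M_{<j}) = H(C_j) = 1$. This is because the model is one-way, so the messages $M_{<j}$ are a (randomized) function only of the inputs $C_1,\ldots, C_{j-1}$ (plus any protocol randomness independent of the inputs), and since the bits $C_1,\ldots, C_T$ are i.i.d.\ Bernoulli$(1/2)$, $C_j$ is independent of $M_{<j}$.

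The main step is then to upper bound $H(C_j \mid M_j, M_{<j})$ using the informative/uninformative dichotomy. By definition, $H(C_j \mid M_j, M_{<j})$ is the expectation over $(M_j, M_{<j})$ of the binary entropy $H(q(m_j, m_{<j}))$, where $q(m_j, m_{<j}) := \Pr(C_j = 0 \mid M_j = m_j, M_{<j} = m_{<j})$. Whenever $(m_j, m_{<j})$ is informative, $|2q-1| \ge c$, so $q(1-q) \le (1-c^2)/4$, and the upper bound in~\autoref{lem:entropy-bound} gives $H(q) \le 2\sqrt{q(1-q)} \le \sqrt{1-c^2}$. On the uninformative event we simply use the trivial bound $H(q) \le 1$. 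Since the informative event occurs with probability exactly $p_j$ by definition, we obtain
\begin{equation*}
H(C_j \mid M_j, M_{<j}) \;\le\; p_j \sqrt{1-c^2} + (1-p_j) \;=\; 1 - p_j\bigl(1 - \sqrt{1-c^2}\bigr).
\end{equation*}

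Plugging this into the mutual information identity yields $I(M_j; C_j \mid M_{<j}) \ge p_j (1 - \sqrt{1-c^2}) = \Omega(p_j)$, where the hidden constant depends only on the fixed $c \in (0,1)$. Summing over $j \in [T]$ and combining with~\eqref{eqn:ic-24} produces the desired bound $\sum_j I(M_j; C_j \mid M_{<j}) \ge \Omega(\sum_j p_j)$.

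I do not anticipate a major obstacle here: the argument is a clean two-case analysis of the binary entropy based on how biased the posterior of $C_j$ is after observing the protocol's messages. The only subtlety is justifying $H(C_j \mid M_{<j}) = 1$, which relies crucially on the one-way model together with the i.i.d.\ marginal $\zeta$ on coordinates; in a more general (e.g.\ blackboard) model this step would need to be revisited.
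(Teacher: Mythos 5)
Your proof is correct and follows essentially the same route as the paper's: decompose via the mutual-information identity, observe $H(C_j \mid M_{<j}) = 1$ by one-way communication and the i.i.d.\ $\mathrm{Bernoulli}(1/2)$ marginal, and then split $H(C_j \mid M_j, M_{<j})$ on the informative/uninformative event using the entropy upper bound of \autoref{lem:entropy-bound}. The only cosmetic difference is that you bound $H(q) \le \sqrt{1-c^2}$ pointwise on informative transcripts rather than routing through $H(1/2 + c/2)$; both yield the same $\Omega(p_j)$ per-term bound.
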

\begin{proof}
We start by  expanding the definition of the mutual information terms. For each  $j \in T$, we have 
\begin{align}\label{eqn:def-ci-27}
     I\left(M_j ;  C_j \mid  M_{<j}\right) &= H\left(C_j \mid M_{<j} \right) - H\left(C_j \mid M_j, M_{<j}\right)
\end{align}
For the first term, notice that $C_j$ and $M_{<j}$ are independent by one-way communication. Moreover, by definition $C_j$ is Bernoulli with parameter $1/2$. Therefore, \[H(C_j \mid M_{<j}) = H(C_j) = H(1/2) = 1.\]  
For the second term, 
\begin{itemize}
    \item either $(M_j, M_{<j})$ is informative, which holds with probability $p_j$, and in this case, the conditional entropy is upper bounded by $ H\left(C_j \mid M_j, M_{<j}\right) \leq H(1/2 + c/2)$; 
    \item or $(M_j, M_{<j})$ is uninformative, and in this case, we trivially upper bound the conditional entropy by $ H\left(C_j \mid M_j, M_{<j}\right) \leq 1$; 
\end{itemize} 
Putting the observations together and using  \autoref{eqn:def-ci-27}, it follows   that 
\begin{align*}
    I\left(M_j ;  C_j \mid  M_{<j}\right) &=H\left(C_j \mid M_{<j} \right) - H\left(C_j \mid M_j, M_{<j}\right)\\
    &\ge 1 - (p_j \cdot H(1/2 + c/2) + (1-p_j) \cdot 1) \\
    &= p_j - p_j \cdot H(1/2 + c/2)\\
    &\ge p_j  \left(1- \sqrt{1-c^2}\right)\\
    &\ge c^2\cdot  \Omega(p_j),
\end{align*}
where the second last step uses the upper bound of \autoref{lem:entropy-bound} and the last step follows since $1-\sqrt{1-x^2} \geq x^2/5$ for $x\in [0,1]$.
Summing over $j = 1,2,\ldots, T$ in \autoref{eqn:sum-ic-26} finishes the proof.
\end{proof}
To prove the claimed information cost inequality \autoref{eqn:single-coord-ic}, we show that $\sum_i p_i = \Omega(M)$. 

\begin{lemma}\label{lem:sum-pj}
There exists a constant $\gamma>0$ such that 
\[\sum_{j=1}^T p_j>\gamma\cdot M.\]
\end{lemma}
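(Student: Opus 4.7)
I will prove the lemma by contradiction: assume $\sum_{j=1}^T p_j \leq \gamma M$ for a sufficiently small constant $\gamma>0$, and exhibit a YES-instance $C^\ast$ (an input with fewer than $M/2$ zeros, a legitimate draw from $\calD^{(1)}_\YES$) on which the protocol must err with probability far larger than its allowed failure probability $2^{-\Theta(T)}$. The proof combines a Markov bound forcing most coordinates to be uninformative with a change-of-measure estimate showing that flipping bits at uninformative coordinates perturbs the transcript probability by only a mild multiplicative factor.

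\textbf{Two ingredients.} Let $X_C = |[T] \setminus S_C|$ be the number of informative indices under the random transcript generated on $C \sim \zeta^T$. By definition $\mathbb{E}[X_C] = \sum_j p_j \leq \gamma M$, so Markov's inequality gives $\Pr[X_C \leq 10\gamma M] \geq 9/10$; combined with the NO-correctness $\Pr[\Pi \text{ outputs NO}] \geq 1 - 2^{-\Theta(T)}$, both events hold simultaneously with probability at least $4/5$, and I call such pairs $(C,R)$ with their transcript $\Pi$ \emph{good}. For the change of measure, note that Bayes' rule with uniform prior $\Pr(C_i = b) = 1/2$ and the uninformativeness bound $|\Pr(C_i=0 \mid M_i, M_{<i}) - \Pr(C_i=1 \mid M_i, M_{<i})| < c$ give $\Pr(M_i \mid C_i=b, M_{<i})/\Pr(M_i \mid M_{<i}) \in [1-c, 1+c]$ for uninformative $i$. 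Since the one-way factorization yields $\Pr(\Pi \mid C) = \prod_i \Pr(M_i \mid C_i, M_{<i})$, flipping any subset of bits on $S_C$ perturbs the transcript probability by a factor of at least $((1-c)/(1+c))^T \geq e^{-3cT}$ for $c$ small.

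\textbf{Contradiction.} For each good $(C,R)$ with transcript $\Pi$, define $C' = C'(C,\Pi)$ to agree with $C$ on informative indices and to equal $1$ on uninformative ones. Then $C'$ has at most $10\gamma M < M/2$ zeros for $\gamma < 1/20$, so it is YES-leaning. The number of distinct $C'$ values with at most $M/2$ zeros is at most $\sum_{k \leq M/2} \binom{T}{k} \leq 2^{O(M \log T)}$, so by pigeonhole some single $C^\ast$ is the $C'$-image of a $2^{-O(M\log T)}$ fraction of the good mass. Combining this with the pointwise change-of-measure applied to each transcript contributing to that mass yields
\[
\Pr_R\bigl[\Pi(C^\ast, R) \text{ outputs NO}\bigr] \;\geq\; e^{-3cT} \cdot 2^{-O(M \log T)} \cdot (4/5).
\]
In the regime $M = o(T/\log T)$, for sufficiently small $\gamma$ and $c$, this strictly exceeds $2^{-\Theta(T)}$, contradicting correctness on the YES-instance $C^\ast$ and forcing $\sum_j p_j > \gamma M$.

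\textbf{Main obstacle.} The crux is the averaging step: the change-of-measure is pointwise in transcripts, so one must aggregate over all transcripts $\pi$ arising from good $(C,R)$ pairs sharing a common $C^\ast$-image, without over-counting. A further subtlety is that the informative/uninformative partition depends on the realized transcript, so the map $(C,R) \mapsto C'$ is random, and the Markov event, the NO-output event, and the change-of-measure calibration against a fixed $C^\ast$ must be handled jointly. Finally, the counting bound $2^{O(M \log T)}$ on YES-leaning targets is loose; one must verify that $M \log T \ll T$, failing which the alternative reduction in Section~\ref{sec:alternative} can be invoked instead.
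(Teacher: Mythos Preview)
Your approach is essentially the paper's: assume $\sum_j p_j$ is small, use Markov to cap the number of informative indices, then use a change-of-measure on the uninformative coordinates to exhibit a YES-leaning input on which the protocol outputs NO with probability exceeding its $2^{-\Theta(T)}$ failure bound. You are in fact more explicit than the paper about the aggregation step—the paper glosses over how one passes from ``most $(C,R)$ pairs are good'' to ``some single fixed $C'$ is bad,'' whereas you spell out the pigeonhole over target strings and the transcript-by-transcript likelihood ratio.

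The one genuine issue is your counting bound in the pigeonhole, which is looser than necessary and creates the artificial restriction $M=o(T/\log T)$. Since you set every uninformative coordinate of $C'$ to $1$, the number of zeros in $C'$ is at most the number of \emph{informative} indices, i.e.\ at most $10\gamma M$, not merely $M/2$. The right count is therefore
\[
N \;\le\; \sum_{k\le 10\gamma M}\binom{T}{k}\;\le\; 2^{\,T\,H(10\gamma M/T)}\;\le\; 2^{\,T\,H(10\gamma)},
\]
using $M\le T$ and monotonicity of $H$ on $[0,1/2]$. For $\gamma$ chosen small, $H(10\gamma)$ is an arbitrarily small constant, so $\log_2 N \le \eta T$ for any prescribed $\eta>0$, \emph{uniformly in $M$}. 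Your final inequality then reads $\Pr_R[\Pi(C^\ast,R)\text{ outputs NO}]\ge e^{-3cT}\cdot 2^{-\eta T}\cdot(4/5)$, which beats any $2^{-\Theta(T)}$ once $c$ and $\gamma$ are taken small relative to the hidden constant. This removes the $M=o(T/\log T)$ restriction entirely, so the detour through the appendix is unnecessary.
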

\begin{proof}
Suppose by way of contradiction that $\sum_{j=1}^T p_j=o(M)$. 
Let $A$ be a protocol that sends (possibly random) messages $M_1,\ldots,M_T$ on a random input $C\in\{0,1\}^T\sim\zeta^T$ drawn uniformly from the NO distribution, i.e., each coordinate of $C:=C_1,\ldots,C_T$ is picked to be $0$ with probability $\frac{1}{2}$ and $1$ with probability $\frac{1}{2}$. 
Moreover, suppose $A$ is a protocol that distinguishes between a YES instance and a NO instance with probability at least $1-\frac{e^{-cT}2^{-T}}{8}$, for some constant $c>0$. 

Since $p_i$ is the probability that $M_i$ is informative, then by assumption, the expected number of informative indices $i$ over the messages $M_1,\ldots,M_T$ is $f(M)$ for some $f(M)=o(M)$. 
Thus by Markov's inequality, the probability that the number of informative indices is at most $10f(M)=o(M)$ with probability at least $\frac{9}{10}$. 
Let $S$ be the set of the uninformative indices so that $|S|=T-10f(M)=T-o(M)$. 
Let $C'$ be an input that agrees with $C$ on the informative indices $[T]\setminus S$ and is chosen arbitrarily on uninformative indices $S$, so that $C'_i=C_i$ for $i\in[T]\setminus S$. 
 
By definition, each uninformative index only changes the distribution of the output by a $(1\pm c)$ factor. 
In particular, for $c\in(0,1/2)$, the probability that the protocol $A$ generates $\Pi$ on input $C'$ is at least $(1-c)^T\ge e^{-2cT}$ times the probability that the protocol $A$ generates $\Pi$ on input $C$. 
However, since $C$ can differ from $C'$ on $S$, then $C$ can differ from $C'$ on $|S|=T-10f(M)=T-o(M)$ indices. 

Now since each coordinate of $C$ is picked to be $0$ with probability $\frac{1}{2}$ and $1$ with probability $\frac{1}{2}$, then the probability that $C$ contains more than $T-M$ zeros is at least $1-T^M\cdot\frac{1}{2^T}\ge 1-2^{T/2}$ for sufficiently large $T$. 
But then there exists a choice of $C'$ that contains fewer than $\frac{M}{2}$ zeros such that $A$ will also output $\Pi$ with probability at least $\frac{e^{-cT}}{2}$. 
Since $C'$ contains fewer than $\frac{M}{2}$, then $C'$ is more likely to generated from a YES instance and indeed a YES instance will generate $C$ with probability $2^{-T}$. 
On the other hand, since $\Pi$ corresponds to a transcript for which $A$ will output NO, then the probability that $A$ is incorrect on $C'$ is at least $\frac{e^{-cT}}{4}$, which contradicts the claim that $A$ succeeds with probability $1-\frac{e^{-cT}2^{-T}}{8}$. 
Thus it follows that $\sum_{j=1}^T p_j=\Omega(M)$, as desired.  
\end{proof}

Now we combine \autoref{lem:ic-sum-pj} and \autoref{lem:sum-pj}. This implies that the information cost can be lower bounded by
\begin{equation}
     I(\Pi; C_1,C_2,\ldots, C_T) \geq  \Omega \left(\sum_{j=1}^T p_j\right) \geq \gamma M,
\end{equation}
for a constant $\gamma >0$. This completes the proof.
\end{proof}


\section{Algorithms Against Adaptive Adversaries}
In this section, we show that there exists  algorithms for the discrete prediction with experts problem that is robust to adaptive outputs. 

\subsection{A Near-Optimal Deterministic Algorithm}\label{sec:det-alg}
We first present a simple deterministic algorithm for arbitrary-order streams with oblivious inputs.

\begin{algorithm}[!htb]
\caption{Deterministic algorithm for the experts problem}
\alglab{alg:det}
\begin{algorithmic}[1]
\Require{A stream of length $T$ with $n$ experts, upper bound $M$ on the number of mistakes made by the best expert, and target regret $R$}
\Ensure{A sequence of predictions with regret $R$}
\State{$k\gets\O{\frac{nM}{RT}\log n}$}
\State{$S \gets \emptyset$}
\While{the stream persists}
\If{$S$ is empty} \Comment{We have cycled through all $n$ experts once}
\State{$S \leftarrow [n]$}
\EndIf
\State{Let $P$ be the first $k$ indices of $S$}
\State{$S\gets S\setminus P$}
\While{$P\neq\emptyset$}
\State{For each following day, choose the outcome output by the majority of the experts in $P$}
\State{Delete the incorrect experts on that day}
\EndWhile
\EndWhile
\end{algorithmic}
\end{algorithm}

We now justify the correctness and space complexity of \algref{alg:det}. 
\begin{theorem}
\label{thm:det}
Among $n$ experts in a stream of length $T$, suppose the best expert makes $M$ mistakes. 
There exists a deterministic algorithm that uses space $\widetilde{O}\left(\frac{nM}{RT}\right)$ and achieves regret $R$.
\end{theorem}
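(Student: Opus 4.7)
The plan is to bound the total number of mistakes made by \algref{alg:det} by carefully accounting for (i) how many mistakes occur within a single pool before it is depleted, and (ii) how many pools can be processed before the best overall expert ``locks in'' the remainder of the stream.

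\textbf{Step 1: mistakes within a single pool.} First, I would analyze the behavior of the inner loop, which is just the classical deterministic Halving (majority) algorithm restricted to the pool $P$. Each time the algorithm errs on a day, by definition a majority of the experts currently in $P$ were wrong on that day, so at least half of $P$ is removed. Hence, starting from $|P|\le k$, the algorithm can make at most $\lfloor\log_2 k\rfloor+1$ mistakes before $P$ becomes empty (the last $+1$ covers the case where $P$ is reduced to a single expert whose eventual error depletes the pool).

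\textbf{Step 2: bounding the total number of pools.} Next, I would argue that each expert $i$ can appear in at most $m_i+1$ different pools, where $m_i$ is the number of mistakes made by expert $i$ throughout the entire stream. Indeed, once an expert is placed in a pool, it remains in that pool (because the outer loop cycles through all indices in $[n]$ before revisiting any expert via the reset $S\gets[n]$) until it makes a mistake, at which point it is deleted. Consequently, the number of pools in which the best expert participates is at most $M+1$. Since each cycle through $[n]$ contains the best expert in exactly one pool, and every cycle besides possibly the last must deplete the best expert's pool (which costs the best expert at least one mistake), the total number of complete cycles is at most $M$. Finally, the stream can end midway through a $(M+1)$-th cycle, contributing at most an additional $n/k$ pools. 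Therefore the total number of pools processed is at most $(M+1)\cdot(n/k)$.

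\textbf{Step 3: combine and choose $k$.} Combining Steps 1 and 2, the total number of mistakes made by the algorithm is at most
\[
(M+1)\cdot\frac{n}{k}\cdot(\lfloor\log_2 k\rfloor+1) \;=\; O\!\left(\frac{nM\log n}{k}\right).
\]
For the algorithm to achieve (amortized) regret at most $R$, it suffices that this quantity be at most $M+RT$, for which it is enough to take
\[
k \;=\; \Theta\!\left(\frac{nM\log n}{RT}\right) \;=\; \widetilde{O}\!\left(\frac{nM}{RT}\right),
\]
matching the choice in the algorithm. The space usage is then $O(k\log n)=\widetilde{O}(nM/(RT))$ bits, since the algorithm only needs to store the current pool $P$ (as a list of expert indices) together with a pointer into $S$ to know which experts to draw next; the set $S$ itself can be represented implicitly as a contiguous range once a canonical ordering of $[n]$ is fixed.

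\textbf{Expected obstacle.} The most delicate point is the per-pool mistake bound in the presence of ties in the majority vote and the edge case where a pool is reduced to $1$ or $0$ experts; this needs a careful case analysis so that the $O(\log k)$ bound is tight (including the ``$+1$'' for depleting a singleton pool). The combinatorial accounting in Step 2 --- that only the best expert's mistakes control the number of cycles --- is conceptually simple but needs to be stated precisely to avoid double counting, particularly in the final partial cycle where the best expert's pool may never be depleted and the algorithm effectively inherits its mistake bound.
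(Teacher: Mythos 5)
Your proposal is correct and takes essentially the same approach as the paper: bound per-pool mistakes by $O(\log k)$ via halving, bound the number of cycles through $[n]$ by $M$ since each completed cycle forces the best expert to err at least once, multiply and solve for $k$. Your accounting is slightly more careful about the $+1$ edge cases (final partial cycle, singleton pool) and about how $S$ can be stored implicitly, but there is no substantive difference from the paper's argument.
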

\begin{proof}
We first remark that the algorithm can make at most $\log k\le\log n$ mistakes over the lifespan of each pool of size $k:=\frac{2nM}{RT}\log n$ because each time the algorithm makes a mistake, at least half of the pool must be incorrect and deleted, so the size of the pool decreases by at least half with each mistake the algorithm mistakes. 

Since each pool $P$ has size $k$ and there are $n$ experts, then there are at most $\frac{2n}{k}$ pools before the entire set $S$, which is initialized to $n$, is depleted. 
Thus, there are at most $\frac{2n}{k}$ pools to iterate through the entire set of experts. 
Moreover, each time the algorithm has iterated through the entire set of experts, each expert must have made at least one mistake. 
This is because an expert is only deleted from the pool $P$ when it has made a mistake and since all experts have been deleted from $P$, then all experts have made at least one mistake. 

Since the best expert makes at most $M$ mistakes, then the best expert can be deleted from the pool $P$ at most $M$ times. 
In other words, the algorithm can cycle through the entire set of $n$ experts at most $M$ times. 

Hence, the total number of mistakes by the algorithm is at most \[\frac{2n}{k}\cdot\log n\cdot M\le\frac{2nRT}{2nM\log n}\cdot\log n\cdot M=RT,\]
so the algorithm achieves regret at most $R$. 
Since the algorithm selects a subset of  $k=\frac{2nM}{RT}\log n$ experts, then the space complexity follows. 
\end{proof}

In light of~\autoref{thm:lb:random}, it is evident that~\autoref{thm:det} is nearly optimal, up to polylogarithmic factors, for deterministic algorithms, which are automatically adversarially robust. 
On the other hand, it does not seem necessary that any adversarially robust algorithm must be deterministic. 
Indeed, we now give a randomized adversarially robust algorithm with better space guarantees. 

\subsection{A Randomized Robust Streaming Algorithm}

We first recall the following randomized algorithm for arbitrary-order streams with oblivious input, i.e., non-adaptive input:

\begin{lemma}[Algorithm for oblivious inputs; \cite{SrinivasWXZ22}]\label{lem:disc:pred}
Let $R > \frac{16 \log^2 n }{T}$, and suppose the best expert makes at most $M\le\frac{\delta T}{128\log^2 n}$ mistakes. 
Then there exists an algorithm $\discpred$ for the discrete prediction with experts problem that uses $\widetilde{O}\left(\frac{n}{RT}\right)$ space and achieves regret at most $R$, with probability at least $1-\frac{1}{\poly(n,T)}$. 
\end{lemma}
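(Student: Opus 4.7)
The plan is to construct a randomized, pool-based algorithm that mirrors the structure of the deterministic \algref{alg:det} but draws each pool uniformly at random from $[n]$ rather than cycling through $[n]$ in a fixed order. Randomization lets the pool size shrink from $\widetilde{O}(nM/(RT))$ to $\widetilde{O}(n/(RT))$ because, unlike the deterministic iteration, a fresh random pool contains the overall best expert with probability $k/n$ every time, so after enough resamplings the best expert is included in linearly many pools with high probability.

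Concretely, I would sample a pool $P\subseteq[n]$ of size $k=\widetilde{\Theta}\!\left(\frac{n}{RT}\right)$ uniformly at random and run a randomized multiplicative-weights (Hedge) subroutine on $P$ with a learning rate tuned to the (expected) epoch length. An epoch ends, and a new pool is resampled, once the mistake counter crosses a prescribed threshold $\tau$. I would classify each epoch as \emph{good} (the pool contains the overall best expert) or \emph{bad}. Within a good epoch of length $\tau_i$ days, standard Hedge analysis gives at most $O\!\left(m_i^\star + \sqrt{\tau_i \log k}\right)$ mistakes, where $m_i^\star$ is the overall best expert's mistakes during the epoch; summing over good epochs the first terms total at most $M$ and the second terms, via Cauchy--Schwarz, total at most $O\!\left(\sqrt{L_g T \log k}\right)$. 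A Chernoff bound over the $L$ epochs shows that at least $\widetilde{\Omega}(Lk/n)$ are good with probability $1-1/\poly(n,T)$. Bad epochs each contribute at most $\tau$ mistakes by construction, for a total of $O(L_b \tau)$.

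Balancing the parameters, I would set $L = \widetilde{\Theta}(RT)$ and $\tau = \widetilde{\Theta}(1)$ so that $L\tau$ and $\sqrt{LT\log k}$ both fit inside $RT/3$, leaving an $M$-sized slack for the good-epoch contribution. The hypothesis $M\le R^2T/(128\log^2 n)$ enters precisely here: it is exactly strong enough that the $\log^2 n$ Hedge overhead does not blow up the $\sqrt{LT\log k}$ term beyond the regret budget. The space bound is immediate, since the algorithm only ever stores the current pool $P$ and its Hedge weights, for a total of $O(k) = \widetilde{O}(n/(RT))$.

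The main obstacle, I expect, is calibrating the epoch-ending rule: bad epochs must be terminated fast enough that their cumulative cost is $O(RT)$, yet good epochs must not be cut off prematurely by random fluctuations of Hedge. I would handle this with an Azuma-type concentration argument on the running mistake count within each epoch, so that inside a good epoch the counter crosses $\tau$ only after the expected number of days has elapsed; a union bound over the at most $\poly(n,T)$ epochs then upgrades the per-epoch failure probability into the global $1-1/\poly(n,T)$ guarantee stated in the lemma.
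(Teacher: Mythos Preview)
The paper does not prove this lemma: it is quoted directly from \cite{SrinivasWXZ22}, and the only thing the present paper adds is a one-paragraph informal description of that algorithm. So there is no in-paper proof to compare against, only that description---and your sketch departs from it in a way that creates a real gap.

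The described \cite{SrinivasWXZ22} algorithm samples a random pool of size $k=\widetilde O(n/(RT))$, runs majority vote while \emph{deleting} poorly performing experts, and resamples only when the pool empties. The crucial feature is that once the overall best expert lands in the pool it is, with its tiny mistake budget $M$, essentially never deleted, so that one good epoch absorbs the entire remainder of the stream. All the loss is paid in the $\widetilde O(n/k)=\widetilde O(RT)$ preceding bad epochs, each costing only $O(\log k)$ mistakes before its pool is exhausted.

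Your resampling trigger---the algorithm's own mistake counter hitting $\tau=\widetilde\Theta(1)$---does not have this property. A good epoch also terminates after $\tau$ algorithm mistakes, and since Hedge already incurs $\Theta(\sqrt{\tau_i\log k})$ regret over $\tau_i$ days, a good epoch lasts only $\tau_i\approx\tau^2/\log k=\widetilde\Theta(1)$ days. More to the point, $L$ is not a free parameter you can ``set to $\widetilde\Theta(RT)$''; it is whatever number of epochs the stream forces. Against the oblivious input in which every non-best expert is wrong on every day, each epoch---good or bad---covers only $\widetilde\Theta(1)$ days, so $L=\widetilde\Theta(T)$ and the total mistake count $L\tau=\widetilde\Theta(T)$ yields constant regret rather than $R$. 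The deletion-based termination in the paper's description is precisely what lets a good epoch persist indefinitely and makes the $\widetilde O(n/(RT))$ pool size sufficient; a mistake-count threshold on the algorithm itself cannot achieve this.
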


The algorithm of~\autoref{lem:disc:pred} proceeds by sampling pools of $k=\widetilde{O}\left(\frac{n}{RT}\right)$ experts and running majority vote on the pool, while iteratively deleting poorly performing experts until no experts remain in the pool, at which a new pool of $k$ experts is randomly sampled. 
The main intuition is that either the pool of experts will perform well and achieve low regret, or the pool will be continuously re-sampled until the best expert is sampled multiple times, after which point it will not be deleted from the pool. 
Unfortunately, it is not evident that this algorithm is robust to adaptive inputs because an adversary can potentially learn the experts in each sampled pool and force the experts to make mistakes only on days in which they are sampled by the algorithm. 

Instead, we use differential privacy to hide the internal randomness of the algorithm and in particular, the identity of the experts that are sampled by each pool. 
We first run $\widetilde{O}(\sqrt{T})$ copies of the algorithm and then output the private median of the $\widetilde{O}(\sqrt{T})$ copies, guaranteeing roughly $\left(\frac{1}{\widetilde{O}(\sqrt{T})},0\right)$-differential privacy because we use $\widetilde{O}(\sqrt{T})$ copies of the algorithm. 
Advanced composition, i.e.,~\autoref{thm:adaptive:queries}, then ensures $\left(O(1),\frac{1}{\poly(n)}\right)$-differential privacy, so that correctness then follows from the generalization properties of DP, i.e.,~\autoref{thm:generalization}.  
We give our algorithm in full in~\algref{alg:dp:alg}.

\begin{algorithm}
\caption{Randomized, robust streaming algorithm for the experts problem}
\alglab{alg:dp:alg}
{\textbf{Input: } A stream of length $T$ with $n$ experts and a target regret $R$}\;

{\textbf{Output: } A sequence of predictions with regret $R$}\;
\begin{algorithmic}[1]

\State{Run $m=\O{\sqrt{T}\log(nT)}$ independent instances of $\discpred$ with regret $\frac{R}{4}$}

\State{Run $\privmed$ on the $m$ instances with privacy parameter $\eps=\O{\frac{1}{\sqrt{T}\log(nT)}}$ and failure probability $\delta=\frac{1}{\poly(n,T)}$}

\State{At each time $t\in[T]$, select the output of $\privmed$}
\end{algorithmic}
\end{algorithm}

We now show the correctness of our algorithm on adaptive inputs. 
\begin{theorem}[Algorithm for adaptive inputs]
Let $R > \frac{64 \log^2 n }{T}$, and suppose the best expert makes at most $M\le\frac{R^2 T}{128\log^2 n}$ mistakes. 
Then there exists an algorithm for the discrete prediction with experts problem that uses $\widetilde{O}\left(\frac{n}{R\sqrt{T}}\right)$ space and achieves regret at most $R$, with probability at least $1-\frac{1}{\poly(n,T)}$.
\end{theorem}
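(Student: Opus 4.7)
The plan is to analyze Algorithm~\algref{alg:dp:alg} in three stages, following the ``differential privacy as adversarial robustness'' paradigm of \cite{HassidimKMMS20, BeimelKMNSS22, AttiasCSS23}. The space bound is immediate: each of the $m = \widetilde{O}(\sqrt{T}\log(nT))$ instances of $\discpred$ uses $\widetilde{O}(n/(RT))$ space when tuned for regret $R/4$ by \lemref{disc:pred} (the preconditions hold because $R > 64\log^2 n/T$ gives $R/4 > 16\log^2 n/T$, and the hypothesis on $M$ is scaled so that the corresponding $M$-bound for regret $R/4$ is satisfied), so the $m$ copies combined with the small $\privmed$ overhead cost $\widetilde{O}(n/(R\sqrt{T}))$ in total.

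For privacy, I would view the $m$ independent $\discpred$ instances as a ``database'' of size $m$, where neighboring databases differ on a single instance. Each invocation of $\privmed$ with $\eps = \widetilde{\Theta}(1/\sqrt{T})$ is $(\eps,0)$-DP by \thmref{dp:median}. Because the adaptive adversary elicits at most $T$ invocations, \thmref{adaptive:queries} gives an overall $(\eps', \delta')$-DP guarantee with $\eps' = O(\sqrt{T\log(1/\delta')}\cdot\eps + T\eps^2) = O(1)$ and $\delta' = 1/\poly(n,T)$ for the chosen parameters.

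For utility, the key point is that the adversary's entire view is an $(\eps',\delta')$-DP function of the internal randomness. By the generalization theorem \thmref{generalization}, the empirical behavior of the released median on the adaptively constructed stream is close to its expected behavior under an oblivious coupling with probability $1-\delta'/\eps' = 1 - 1/\poly(n,T)$. Against an oblivious input, \lemref{disc:pred} guarantees each $\discpred$ instance achieves regret at most $R/4$ with high probability; a union bound shows that on any given day at least $3m/4$ of the $m$ instances remain ``good,'' and the rank-error bound of \thmref{dp:median} (namely $\widetilde{O}(1/\eps) = \widetilde{O}(\sqrt{T}\log(nT)) \ll m/4$) ensures the released prediction agrees with some good instance. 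Summing the base $R/4$ regret, the private-median approximation error, and the generalization slack ($O(\eps')$ per day, amortized) then yields cumulative regret at most $R$.

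The main obstacle I expect is the last step: instantiating \thmref{generalization} in the online adaptive setting. That theorem is stated for a single Boolean query against a product distribution over $X$, whereas here the algorithm issues $T$ adaptive queries and the ``distribution'' is over the internal randomness of the $m$ $\discpred$ copies. The careful argument must use a hybrid between the adaptive stream and an oblivious stream seeded by the same internal randomness, and apply generalization round by round while accounting for the accumulated DP budget across all $T$ rounds. Verifying that the generalization preconditions ($m \ge \eps'^{-2}\log(2\eps'/\delta')$, $\delta' < \eps'/4$) are consistent with the choices $\eps = \widetilde{\Theta}(1/\sqrt{T})$ and $m = \widetilde{\Theta}(\sqrt{T}\log(nT))$, and that the three sources of error sum to at most $R$, is where the bulk of the bookkeeping lies.
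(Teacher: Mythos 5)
Your proposal follows the same overall blueprint as the paper: run $m = \widetilde{O}(\sqrt{T}\log(nT))$ independent copies of the oblivious $\discpred$ algorithm, release a private median of their predictions on each day, use advanced composition to bound the cumulative privacy loss over $T$ rounds, and invoke generalization of DP to argue the adaptive adversary cannot exploit the internal randomness. The space accounting and the privacy half of the argument match the paper's.

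The utility half has a real gap. You write that the rank-error bound of $\privmed$ ``ensures the released prediction agrees with some good instance,'' and then sum a per-day private-median error and a per-day generalization slack. This does not bound the regret. Even if all $m$ instances achieve regret $R/4$ (so each makes at most $M + RT/4$ total mistakes), and even if the released median always coincides with \emph{some} good instance, it can coincide with a \emph{different} good instance on different days, and the union of their mistake-days can be as large as $m(M+RT/4)$. The paper closes the loop with a double-counting argument that your sketch omits: by the rank-error guarantee, whenever the released prediction is wrong on a day, at least $m/2 - \widetilde{O}(1/\eps) \ge m/3$ of the instances are \emph{also} wrong on that day; summing over days gives $(\text{number of algorithm mistakes})\cdot (m/3) \le \sum_{i=1}^m (\text{number of mistakes by instance }i) \le m\left(M+\tfrac{RT}{4}\right)$, hence at most $3\left(M+\tfrac{RT}{4}\right) \le M+RT$ total algorithm mistakes. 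The per-day ``amortized $O(\eps')$ slack'' framing should be replaced by this counting step: the private-median error is a rank error among the $m$ predictions, not an additive prediction error, so it cannot be spread additively across days.
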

\begin{proof}
Suppose we run $m=\O{\sqrt{T}\log(nT)}$ independent instances of $\discpred$ with regret $\frac{R}{4}$. 
Note that for $R > \frac{64\log^2 n}{T}$, we have $\frac{R}{4}> \frac{16\log^2 n }{T}$, which is a valid input to $\discpred$ in~\autoref{lem:disc:pred}. 
By~\autoref{lem:disc:pred}, each instance succeeds on an arbitrary-order stream with probability at least $1-\frac{1}{\poly(n,T)}$. 
By a union bound over the $m$ instances, all instances succeed with probability at least $1-\frac{1}{\poly(n,T)}$. 
In particular, each instance has regret at most $\frac{R}{4}$, so that the total number of mistakes by each instance is at most $M+\frac{RT}{4}$. 
Thus, the total number of mistakes by all instances is at most $m\left(M+\frac{RT}{4}\right)$. 

To consider an adaptive stream, observe that $\privmed$ is called with privacy parameter $\O{\frac{1}{\sqrt{T}\log(nT)}}$ and failure probability $\frac{1}{\poly(n,T)}$. 
By~\autoref{thm:adaptive:queries}, the mechanism permits $T$ adaptive interactions and guarantees privacy $\O{1}$ with failure probability $\frac{1}{\poly(n,T)}$. 
By~\autoref{thm:generalization}, we have that with high probability, if the output of the algorithm is incorrect, then at least $\frac{m}{3}$ of the instances $\discpred$ are also incorrect. 
Since the total number of mistakes by all instances is at most $m\left(M+\frac{RT}{4}\right)$, then the total number of mistakes by the algorithm is at most $3\left(M+\frac{RT}{4}\right)\le M+RT$, since $M\le\frac{R^2 T}{128\log^2 n}$. 
Hence, the algorithm achieves $R$ regret with high probability. 

By~\autoref{lem:disc:pred}, each instance of $\discpred$ uses $\widetilde{O}\left(\frac{n}{RT}\right)$ space. 
Since we use $m=\O{\sqrt{T}\log(nT)}$ independent instances of $\discpred$, then the total space is $\widetilde{O}\left(\frac{n}{R\sqrt{T}}\right)$. 
\end{proof}

\section*{Acknowledgements}
We thank Binghui Peng for helpful discussions. 
David P. Woodruff and Samson Zhou were supported by a Simons Investigator Award and by the National Science Foundation under Grant No. CCF-1815840. 
Fred Zhang was supported by ONR grant N00014-18-1-2562.

\def\shortbib{0}
\bibliographystyle{alpha}
\bibliography{references}

\newpage
\appendix
\section{An Alternative Proof in the Large Mistake Regime}\label{sec:alternative}
We give another analysis of the information cost when $M= \Omega(T)$, where $M$ is the number of mistakes of the best expert. 

 \begin{lemma}[Single-Coordinate Information Cost Lower Bound]\label{lem:single-lb2}
Let $c\in (0,1)$ and $\Pi$ be any protocol with  error $\delta = 2^{-T}$ for $\biasdetect_1$. Suppose that the best expert makes $M = c'T$ mistakes for some constant $c'$.  We have that the information cost of $\Pi$ with respect to $\zeta$ is at least 
\begin{equation}
    I (\Pi (C_1,\cdots , C_T); C_1,\cdots, C_T) \geq \Omega \left((1-c)^2 T\right),
\end{equation}
where $C_i \sim \zeta$ are i.i.d.\ single coordinates.
\end{lemma}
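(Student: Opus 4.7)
The plan is to derive the information cost lower bound by reducing from the variant of set disjointness introduced by \cite{kamath2021simple}. In the large-mistake regime $M = c'T$, the single-column $\biasdetect_1$ problem is, by definition, a $T$-party distributed hypothesis test between the product distributions $\zeta^T = \mathrm{Bern}(1/2)^T$ (the NO case) and $\mathrm{Bern}(1-c')^T$ (the YES case). Each pair of Bernoulli marginals has total variation of order $c'$, but no bounded subset of players can distinguish the two hypotheses, so the protocol must aggregate information across essentially all $T$ coordinates.

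The first step is to set up a local reduction from the Kamath et al.\ disjointness variant to $\biasdetect_1$. In their formulation, each of the $T$ players holds $O(1)$ bits, and any protocol that succeeds with inverse-exponential error must transmit $\Omega(T)$ total bits. The reduction equips each player of the disjointness instance with a local map that produces a single bit whose marginal coincides with $\mathrm{Bern}(1/2)$ under the NO case and $\mathrm{Bern}(1-c')$ under the YES case. Consequently, any protocol for $\biasdetect_1$ with error $2^{-T}$ immediately yields a protocol for the Kamath et al.\ disjointness variant with comparable error, which is the key lever that imports their lower bound into our setting.

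The second step is to translate the lower bound into an information cost statement. The Kamath et al.\ argument, at its core, is information-theoretic: it bounds the mutual information between the transcript and the input through a single-coordinate analysis followed by a direct sum in the style of \cite{Bar-YossefJKS04}. Combined with a Pinsker-type estimate, this produces a per-coordinate information contribution of $\Omega((1-c)^2)$, where the factor $(1-c)^2$ captures the Bernoulli-parameter gap between the two hypotheses, and the direct sum then yields the target $\Omega((1-c)^2 T)$ bound. Since $\calD_{\NO}^{(1)} = \zeta^T$ is a pure product distribution and the protocol is one-way, the independence assumptions needed for the direct sum are automatic.

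The main obstacle is threading the very small error parameter $\delta = 2^{-T}$ through the Kamath et al.\ framework, whose canonical formulation is for constant error. I would either verify that their per-coordinate argument tolerates inverse-exponential error directly, or apply a boosting step at the protocol level --- repeating independent copies and taking a majority vote --- while carefully accounting for the resulting factor in the information cost so that the $\Omega(T)$ scaling is preserved. The remaining technicalities, namely tracking constants through the reduction and verifying that the product structure of $\calD_{\NO}^{(1)}$ required for the direct sum is preserved, are routine.
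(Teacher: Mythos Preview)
Your proposal has genuine gaps that prevent it from constituting a proof.

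First, the black-box reduction you describe is never made concrete. You assert that each player of the Kamath et al.\ instance applies a local map taking $O(1)$ bits to a single bit with the correct marginals, but you do not identify which Kamath et al.\ problem you are reducing from, nor do you verify that the joint distribution after the map matches $\zeta^T$ in the NO case and $\mathrm{Bern}(1-c')^T$ in the YES case. More seriously, information cost is distribution-specific, and reductions do not automatically transfer it in the direction you need: if $C = f(X)$ and $\Pi$ runs on $C$, then data processing gives $I(\Pi; X) \ge I(\Pi; C)$, so a lower bound on $I(\Pi; X)$ does not imply one on $I(\Pi; C)$ unless $f$ is injective. You would need the Kamath et al.\ inputs to already be single bits and their hard distribution to already be $\zeta^T$, at which point the ``reduction'' is vacuous and you are simply re-deriving their argument inside $\biasdetect_1$.

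Second, your treatment of the error parameter is backwards. The exponentially small error $\delta = 2^{-T}$ is not an obstacle to be worked around; it is the engine of the argument. A constant-error lower bound for this problem is trivially $\Omega(1)$ (a single player can test their own bit), so there is no $\Omega(T)$ constant-error bound to import. Boosting goes the wrong way: repeating $\Theta(T)$ copies builds a low-error protocol from a constant-error one, which is useless for lower-bounding the cost of a protocol that is already low-error. What the paper actually does is exploit the $2^{-T}$ guarantee directly: after expanding $I(\Pi; C)$ via the chain rule and a Pinsker-type step into an average of squared per-coordinate total variation distances $d_{\TV}(\Pi_b, \Pi_{b\oplus e_j})$, it invokes the Kamath et al.\ ``not-seen'' and ``large-set'' lemmas (their Lemmas~3.5 and~3.6) to show that if these distances summed to $o(T)$, one could find a set $S$ of size $\Theta(T)$ on which the protocol effectively ignores its inputs, forcing $d_{\TV}(\Pi_b, \Pi_{b\oplus e_S})$ to be bounded away from $1$. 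Since $b$ and $b\oplus e_S$ straddle the YES/NO threshold, this contradicts the $2^{-T}$ error bound. That contradiction step is the core idea, and it is absent from your outline.
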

Applying direct sum theorem (\autoref{lem:direct-sum}), we  get the following information complexity lower bound for $\biasdetect_n$:
\begin{theorem}[$n$-Coordinate Information Complexity Lower Bound]
Let $c \in (0,1)$ and assume $M= c'T$ for some constant $c'$. Then $$\IC_{\calD^{(1)},2^{-\Theta(T)}}(\biasdetect_n)=\Omega \left((1-c)^2 nT\right) . $$
\end{theorem}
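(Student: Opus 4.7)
The plan is to prove this theorem as an immediate corollary of two results already established in the paper: the direct sum decomposition of \autoref{lem:direct-sum} and the new single-coordinate lower bound of \autoref{lem:single-lb2}. The structure is entirely parallel to the derivation of \autoref{thm:n:lb} from \autoref{lem:single-lb} in the main body; only the single-column bound being plugged in changes.

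First I would note that by construction $\calD_\NO^{(n)} = \zeta^n$ is a product distribution across the $n$ columns of the input matrix, where $\zeta$ is the single-column distribution with each entry i.i.d.\ Bernoulli$(1/2)$. This is exactly the hypothesis needed by the information-cost decomposition of \autoref{lem:decompose}, which in the notation of \autoref{lem:direct-sum} gives
\[
\IC_{\calD_\NO^{(n)},\,2^{-\Theta(T)}}(\biasdetect_n) \;\ge\; n \cdot \IC_{\calD_\NO^{(1)},\,2^{-\Theta(T)}}(\biasdetect_1).
\]
Second, I would invoke \autoref{lem:single-lb2}, which under the assumption $M = c'T$ lower-bounds the single-column information cost by $\Omega((1-c)^2 T)$ against any protocol with error $\delta = 2^{-\Theta(T)}$. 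Multiplying the two inequalities yields the claimed $\Omega((1-c)^2 nT)$ bound.

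There is essentially no obstacle: the argument is a two-line composition, and the only sanity check is that the error parameter $\delta = 2^{-\Theta(T)}$ matches at both invocations, which it does by construction. All of the substantive work is hidden inside \autoref{lem:single-lb2}; its proof would follow the same informative/uninformative index counting template as in \autoref{sec:key-lemma-pf}, with the sharper $(1-c)^2$ factor arising because, when $M = \Theta(T)$, the entropy bookkeeping in the analogue of \autoref{lem:ic-sum-pj} can be carried out directly at the level of $T$ rather than being bottlenecked by an $o(M)$ vs.\ $T$ gap as in the small-$M$ argument. Consequently the contradiction step (the analogue of \autoref{lem:sum-pj}) goes through at the $2^{-T}$ error scale, which is the reason the theorem is stated only in the large-mistake regime $M = \Omega(T)$.
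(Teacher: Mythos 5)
Your proof is correct and matches the paper's: both obtain the theorem as an immediate corollary by composing the direct-sum decomposition of \autoref{lem:direct-sum} with the single-coordinate bound of \autoref{lem:single-lb2}. One small aside: your speculation that \autoref{lem:single-lb2} is proved by the same informative/uninformative index counting as in \autoref{sec:key-lemma-pf} is not how the paper actually argues it --- the paper instead uses the total-variation-distance machinery of \cite{kamath2021simple} (\autoref{lem:not-seen} and \autoref{lem:large-set}) --- but this does not affect the validity of your derivation of the present theorem.
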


By an argument similar to \autoref{thm:lb:random}, we have:
\begin{theorem}[Memory lower bound for expert learning]
\label{thm:lb:random2}
Let $M = c'T$ for some constant $c'$. 
Any streaming algorithm that achieves constant regret for the experts problem with probability at least $1-2^{-\Theta(T)}$ must use at least $\Omega(n)$ space, where the best expert makes $M$ mistakes.
\end{theorem}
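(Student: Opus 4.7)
The plan is to follow the template of the proof of \autoref{thm:lb:random} almost verbatim, replacing the information complexity bound of $\Omega(nM)$ with the alternative bound $\Omega((1-c)^2 nT)$ from the theorem stated immediately before, and specializing to $M=c'T$ and constant regret. Since $M=\Omega(T)$ already, there is no need to scale the stream length: I would instantiate $\biasdetect_n$ directly on a $T \times n$ matrix, interpreting each of the $T$ rows as a day and each of the $n$ columns as an expert's sequence of predictions. The parameter of the biased coin in the YES column is $1-M/T = 1-c'$, a constant bounded away from $1/2$.

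Next I would verify that any streaming algorithm achieving a sufficiently small constant regret $R$ on this reduction can distinguish the YES and NO cases with probability $1-2^{-\Theta(T)}$. A Chernoff bound shows that in the NO case every expert's number of correct predictions concentrates around $T/2$ up to an additive $O(\sqrt{T \log n}) = o(T)$ term (with failure $\exp(-\Omega(T))$), while in the YES case the planted expert $L$ concentrates around $(1-c')T$. A union bound over the $n=o(2^T)$ experts preserves both events with probability $1-2^{-\Theta(T)}$. Choosing $R$ smaller than a suitable fraction of $1/2 - c'$, the algorithm's total mistake count is at most $c'T + RT$ in the YES case and at least $T/2 - RT$ in the NO case; these are separated by an $\Omega(T)$ additive gap, so thresholding the algorithm's empirical correct-prediction rate solves $\biasdetect_n$ with the required probability.

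I would then invoke the $n$-coordinate information complexity bound $\IC_{\calD^{(1)},2^{-\Theta(T)}}(\biasdetect_n) = \Omega((1-c)^2 nT)$ stated just above the theorem, together with \autoref{lem:ic-cc}, to conclude that the total communication of the induced $T$-player protocol is $\Omega(nT)$ (absorbing the constant $(1-c)^2$). Because the streaming algorithm's message in each round is bounded by its working memory, distributing the $\Omega(nT)$ total communication across the $T$ players forces per-player communication, and hence the space, to be $\Omega(n)$.

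The only nontrivial piece is checking that the distinguishing reduction has exponentially small failure so that it composes with the high-probability information-cost lower bound; this is precisely the role of the union bound permitted by the hypothesis $n=o(2^T)$ (implicit, as in \autoref{thm:lb:random}) and by choosing the constants $R$ and $c$ appropriately given $c'$. Beyond that bookkeeping, the argument is a direct specialization of the proof of \autoref{thm:lb:random}, so I expect no new conceptual obstacles.
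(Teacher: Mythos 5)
Your proposal is essentially the paper's argument: the paper proves \autoref{thm:lb:random2} by writing only ``by an argument similar to \autoref{thm:lb:random},'' and the intended fill-in is precisely what you wrote --- instantiate $\biasdetect_n$ on a $T\times n$ matrix (no rescaling of the horizon, since $M=\Theta(T)$ and $R=\Theta(1)$), combine the $n$-coordinate information complexity bound $\Omega((1-c)^2 nT)$ with \autoref{lem:ic-cc} to lower bound total one-way communication by $\Omega(nT)$, and divide by the $T$ message-passing steps to obtain an $\Omega(n)$ space bound.

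One small imprecision worth fixing if you were to write this up carefully: the clause ``at least $T/2 - RT$ in the NO case'' cannot be obtained from the regret guarantee, which only \emph{upper} bounds the algorithm's mistakes relative to the best expert. The lower bound on the algorithm's mistakes in the NO case has to come from the masking/unpredictability argument (the true label on each day is a fresh random bit, and since every column is unbiased the algorithm's prediction is independent of it, so the algorithm is correct only on $T/2 \pm o(T)$ days with failure probability $2^{-\Theta(T)}$). The paper itself elides this in the proof of \autoref{thm:lb:random}, so you are matching its level of detail, but the justification you give for that particular inequality is not the right one; the threshold distinguisher you describe does work once the masking step is in place. Similarly, for the union bound over $n$ experts to land at failure $2^{-\Theta(T)}$ you need a constant-fraction deviation in the Chernoff bound (not $O(\sqrt{T\log n})$), but again this is exactly the looseness already present in the paper's \autoref{thm:lb:random}.
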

 
For the purpose of  proving \autoref{lem:single-lb2}, we need some technical lemmas. 

\begin{lemma} [Lemma 3.5 of \cite{kamath2021simple}] \label{lem:not-seen}
Consider any   communication protocol $\Pi$ where each player receives
one bit and condition on any fixed input $b \in \{0,1\}^T$. Each player $i$ can be implemented such that, if the other players receive input $b_{-i}$, player $i$
only observes their input with probability $d_\TV (\Pi_b, \Pi_{b \oplus e_i})$.
\end{lemma}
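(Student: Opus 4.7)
The plan is to prove the lemma via a maximal-coupling argument at the level of whole transcripts. I would begin by fixing the other players' inputs to $b_{-i}$ and also fixing all of their internal randomness, so that the only stochasticity in the transcript comes from player $i$. The two conditional transcript distributions corresponding to player $i$'s input equal to $b_i$ and to $b_i \oplus 1$ are then $\Pi_b$ and $\Pi_{b \oplus e_i}$, and by the coupling characterization of total variation distance there exists a joint distribution $\gamma$ on pairs $(\tau, \tau')$ with marginals $\Pi_b$ and $\Pi_{b \oplus e_i}$ satisfying $\Pr_{(\tau, \tau') \sim \gamma}[\tau = \tau'] = 1 - d_\TV(\Pi_b, \Pi_{b \oplus e_i})$.

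Next, I would describe an explicit implementation of player $i$ that realizes this coupling. Using shared or public randomness, player $i$ samples a pair $(\tau, \tau') \sim \gamma$ before the protocol begins. At each round where player $i$ must send a message, they consult $\tau$ and $\tau'$: if the two transcripts prescribe the same next message, player $i$ sends it without querying $b_i$; if they disagree, player $i$ reads $b_i$ and plays according to $\tau$ when $b_i$ matches the $i$-th coordinate of $b$, or according to $\tau'$ otherwise. This implementation is well-defined because, under the conditioning described above, the other players' behavior is a deterministic function of the messages they receive, so player $i$'s messages together with the coupling completely determine the rest of the transcript.

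To verify the two required properties, I would observe that $b_i$ is consulted only in the event $\tau \neq \tau'$, which occurs with probability exactly $d_\TV(\Pi_b, \Pi_{b \oplus e_i})$ by the maximal-coupling property, and that the output transcript is distributed as $\Pi_b$ when $b_i$ agrees with the $i$-th coordinate of $b$ and as $\Pi_{b \oplus e_i}$ otherwise, since these are precisely the two marginals of $\gamma$.

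The main obstacle is justifying the use of the joint sample from $\gamma$ in a setting where player $i$ does not a priori have access to the other players' randomness. The cleanest resolution is to allow public coins, in which case $\gamma$ can be sampled from shared randomness; alternatively, one can perform the coupling round-by-round, applying a maximal coupling to the conditional distribution of player $i$'s next message at each of their decision points and arguing via a chain-rule (telescoping) bound for TV distance that the aggregate probability that the coupling fails somewhere along the transcript is at most $d_\TV(\Pi_b, \Pi_{b \oplus e_i})$, which suffices for the stated bound.
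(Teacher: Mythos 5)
The paper does not give a proof of this statement; it imports it as Lemma~3.5 of \cite{kamath2021simple} and uses it as a black box. So I will judge your argument on its own.

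Your primary construction has a gap. After fixing the other players' internal randomness $r_{-i}$, the two conditional transcript distributions are $\Pi_b \mid r_{-i}$ and $\Pi_{b\oplus e_i} \mid r_{-i}$, not $\Pi_b$ and $\Pi_{b\oplus e_i}$ as you state. Coupling the conditionals makes player $i$ read their bit with probability $d_\TV(\Pi_b \mid r_{-i}, \Pi_{b\oplus e_i} \mid r_{-i})$, and by joint convexity of total variation, $\mathbb{E}_{r_{-i}}\left[d_\TV(\Pi_b \mid r_{-i}, \Pi_{b\oplus e_i} \mid r_{-i})\right] \ge d_\TV(\Pi_b,\Pi_{b\oplus e_i})$. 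This inequality points the wrong way: your player reads $b_i$ at least as often as the lemma allows, not with the claimed probability. Public coins do not repair this, since the inflation of TV under conditioning is a property of the distance itself, not of the model. There is also an implementability concern with pre-sampling an entire transcript, but the conditioning issue is already fatal.

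Your fallback---maximally coupling player $i$'s next-message distribution at each of their turns---is essentially the intended argument, but the claimed ``chain rule for TV'' upper bound is not true for general interactive protocols: greedy per-round maximal couplings can fail strictly more often than the end-to-end TV distance. What saves you is that the paper applies this lemma only in the one-way message-passing chain, where player $i$ speaks exactly once. Concretely, let $\pi_{m,c}$ be the law of player $i$'s outgoing message given incoming message $m$ and input bit $c$. Player $i$ draws its message from the maximal coupling of $\pi_{m,b_i}$ and $\pi_{m,b_i\oplus 1}$, and only reads $b_i$ when the coupled pair disagrees, which occurs with conditional probability $d_\TV(\pi_{m,b_i},\pi_{m,b_i\oplus 1})$. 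Because the distribution of the incoming message does not depend on $C_i$, and the downstream transcript is a stochastic post-processing of $M_i$ that is also independent of $C_i$, factoring the transcript likelihood gives the exact identity $\mathbb{E}_{m}\left[d_\TV(\pi_{m,b_i},\pi_{m,b_i\oplus 1})\right] = d_\TV(\Pi_b,\Pi_{b\oplus e_i})$. This yields the stated read-probability with no public coins and no access to the other players' randomness; I would drop your primary construction and present only this local-coupling argument, making the one-way structure explicit.
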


\begin{lemma}[Lemma 3.6 of \cite{kamath2021simple}]\label{lem:large-set}
Let $c\in(0,1)$, $p \in (0,\frac{1-c}{2})$ and $\gamma_c = \frac{1}{c \log (e/c)}$. For a set of binary random variables $Y_1,Y_2,\cdots, Y_k$ such that $\E {\sum_i Y_i} = pk$, there exists a set $S \subset [n]$ of size $ck$ such that $\Pr (Y_j = 0,  \forall j\in S) > e^{{-k/\gamma_c-1}}$.
\end{lemma}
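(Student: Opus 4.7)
The plan is to use the probabilistic method: sample $S \subseteq [k]$ uniformly at random among all $ck$-subsets and argue that $\mathbb{E}_S[\Pr_Y(Y_S = 0^S)] > e^{-k/\gamma_c - 1}$, which by averaging forces the existence of a fixed $S$ of size $ck$ with $\Pr_Y(Y_S = 0^S) > e^{-k/\gamma_c - 1}$.

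The key step is a Fubini swap. For each realization of $Y$, let $U := \{i \in [k] : Y_i = 0\}$. Since $Y_S = 0^S$ is equivalent to $S \subseteq U$, we have $\Pr_S(Y_S = 0^S) = \binom{|U|}{ck}/\binom{k}{ck}$, so
\[
\mathbb{E}_S[\Pr_Y(Y_S = 0^S)] \;=\; \frac{\mathbb{E}_Y\!\left[\binom{|U|}{ck}\right]}{\binom{k}{ck}}.
\]
Writing $|T| := \sum_i Y_i$, we have $\mathbb{E}|T| = pk$, so Markov's inequality gives $\Pr(|T| \geq (1-c)k) \leq p/(1-c)$; under the hypothesis $p < (1-c)/2$ this is at most $1/2$, so $\Pr(|U| \geq ck) \geq 1/2$. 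On this event, $\binom{|U|}{ck} \geq 1$, which gives $\mathbb{E}_Y[\binom{|U|}{ck}] \geq 1/2$.

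Combining with the standard estimate $\binom{k}{ck} \leq (e/c)^{ck}$ yields
\[
\mathbb{E}_S[\Pr_Y(Y_S = 0^S)] \;\geq\; \frac{1}{2}\left(\frac{c}{e}\right)^{ck} \;=\; \frac{1}{2}\,e^{-ck\log(e/c)} \;=\; \frac{1}{2}\,e^{-k/\gamma_c} \;>\; e^{-k/\gamma_c - 1},
\]
where the last inequality uses $1/2 > 1/e$. An averaging argument then produces the desired $S$.

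The argument is elementary and requires no information-theoretic machinery; the only subtle point is that the assumption $p < (1-c)/2$ is precisely what forces the Markov bound to place probability at least $1/2$ on the favorable event $\{|U| \geq ck\}$ (without it, $|U|$ could concentrate below $ck$ and a different strategy would be needed). Sharper estimates are available---for instance, since $\binom{x}{ck}$ is convex in $x$, Jensen's inequality gives $\mathbb{E}_Y[\binom{|U|}{ck}] \geq \binom{(1-p)k}{ck}$---but the Markov-based route already delivers the stated lemma.
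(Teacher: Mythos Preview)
Your proof is correct. The paper itself does not prove this lemma; it is quoted verbatim from \cite{kamath2021simple} and used as a black box, so there is no ``paper's own proof'' to compare against. Your probabilistic-method argument---sample $S$ uniformly among $ck$-subsets, swap expectations via Fubini, use Markov on $\sum_i Y_i$ to guarantee $\Pr(|U|\ge ck)\ge 1/2$, and bound $\binom{k}{ck}\le (e/c)^{ck}$---is the natural route and matches the spirit of the original proof in \cite{kamath2021simple}. One small cosmetic point: the lemma as stated in the paper has a typo ($S\subset[n]$ should be $S\subset[k]$), which you silently and correctly fixed.
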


\begin{proof}[Proof of \autoref{lem:single-lb2}]

Let $(C_1,C_2,\cdots , C_n) \sim \zeta^n$ be a single column drawn from the NO case, where each coordinate is i.i.d.\ Bernoulli with parameter $1/2$. Let $M = c'T$ for some constant $c'$. We consider the one-way message-passing model, where for all $i<T$, $M_i$ denotes the message sent from player $P_i$ to player $P_{i+1}$.   It suffices to lower bound
\begin{align*}
     I(\Pi; C_1,\cdots, C_T) =  \sum_{j= 1}^{T} I(\Pi ; C_j | C_{<j}).
\end{align*}
by the chain rule of mutual information.
We claim that for any $j$
\begin{align*}
     I(\Pi ; C_j | C_{<j}) =   I(\Pi ; C_j | C_{-j}). 
\end{align*}
First, by data processing and the one-way nature  of the protocol
\begin{equation*}
I(\Pi ; C_j | C_{<j}) =  I(M_{\leq j}; C_j | C_{<j}).
\end{equation*} 
for any $j$. Now we just need to show that 
\begin{equation*}
  I(M_{\leq j}; C_j | C_{<j}) = I (\Pi ; C_j | C_{-j}).
\end{equation*} 
By chain rule of mutual information, we can write the right-hand side as 
\begin{align*}
    I (\Pi ; C_j | C_{-j}) &= I(M_{\le j }; C_j | C_{-j}) + I(M_{> j} ; C_j | M_{\leq j}, C_{-j}) \\
    &= I(M_{\le j }; C_j | C_{-j}) + I(M_{> j} ; C_j | M_{\leq j}, C_{>j})
\end{align*}
Observe that $M_{>j}$ and $C_j$ are independent, conditional on $M_{\leq j}$ and $ C_{>j}$. Hence, 
\[I(M_{> j} ; C_j | M_{\leq j}, C_{>j}) = 0\] and this proves the claim. 

Let $\Pi_b$ be the distribution of the protocol transcript when the input is fixed to be $b \in \{0,1\}^n$ and $\oplus$ denote the binary XOR. 
Now  we can bound 
\begin{align}
     I(\Pi; C_1,\cdots, C_T) &=  \sum_{j= 1}^{T} I(\Pi ; C_j | C_{<j}) \nonumber\\
     &=   \sum_{j= 1}^{T} I(\Pi ; C_j | C_{-j}) \nonumber\\
     &\ge \frac{1}{8} \frac{1}{2^{T}}  \sum_{b \in \{0,1\}^{T}} \sum_{j=1}^T d^2_{\TV} (\Pi_{b \oplus e_j}, \Pi_b) \nonumber\\
     & \geq \frac{1}{8}\frac{1}{2^{T}}  \sum_{b \in \{0,1\}^{T}} \sum_{j : b_j = 0} d^2_{\TV} (\Pi_{b \oplus e_j}, \Pi_b).\label{eqn:final-line}
\end{align}
Conditioned on an input $b \in \{0,1\}^T$, let $k = | \{ i : b_i = 0\}|$ and assume for the sake of a contradiction that 
\begin{equation}\label{eq:contradict-Tp}
\sum_{i: b_i = 0} d_{\TV} (\Pi_{b \oplus e_i}, \Pi_b) = kp, 
\end{equation}
where $p < \frac{1-c}{2}$.  Let $p_i = d_\TV  \left(\Pi_{b \oplus e_i}, \Pi_b\right)$ for every player $i\in [T]$. \autoref{lem:not-seen} implies that the protocol can be equivalently implemented such that  if the other players receive $b_{-i}$, player $i$ only looks at their input with probability $p_i$.  If the player $i$ does not look at their bit, then their message $M_i$ is independent of their input bit. 
Let $Y_i$ denote the indicator random variable for the event that player $i$ looks at their input in this equivalent protocol. 

It follows from our assumption \eqref{eq:contradict-Tp} that if the input is $b$,  then   $\E { \sum_{ i : b_i = 0} Y_i}  = \sum_i p_i = kp$. By the definition of $Y_i$, if for any set $S$, $Y_i = 0$ for all $i\in S$, then all players in $S$ do not look at their input bits. 
Let $E_S$ denotes the event that $Y_i =0$ for all $i\in S$, for some $S \subseteq \{i: b_i = 0\}$.  
Then since the players in $S$ do not look at their input bits,
\[d_\TV (\Pi_{b\oplus e_S} | E_S, \Pi_b | E_S)=0.\]
In particular, using this and the law of total probability, we get that 
\begin{align}\label{eqn:single-tv}
    d_\TV (\Pi_{ b \oplus e_S}  , \Pi_b) &= \Pr (E_S) \cdot d_\TV (\Pi_{b\oplus e_S} | E_S, \Pi_b | E_S) + \Pr (\overline{E_S})  \cdot  d_\TV (\Pi_{b\oplus e_S} |\overline E_S, \Pi_b | \overline E_S) \nonumber \\ & \leq \Pr (\overline{E_S}).
\end{align}
By our assumption, $\E{\sum_{i: b_i=0} Y_i } = kp $ for $p< \frac{1-c}{2}$. 
Applying \autoref{lem:large-set}, we obtain that there exists a set $S\subseteq  \{i:b_i = 0\}$ with $|S|= ck$ such that $\Pr(E_S) \geq  e^{{-k/\gamma_c-1}}$. For any $k < {( T  -2)}{\gamma_c}<T-2$, we have $\Pr (E_S)  > e\delta $, and so $\Pr (\overline{E_S}) < 1-e\delta $. By Eqn.\ \eqref{eqn:single-tv}, $d_\TV (\Pi_{ b \oplus e_S}  , \Pi_b)  < 1-e\delta$. Observe that  $b \oplus e_S$  differs from $b$ by having $|S|=ck$ more $1$'s; and they have same value  at all other coordinates.
Recall that in a typical single-coordinate YES instance, there are $T-M$ number of $1$'s, which is $T/2 - M$ more than a typical NO instance. 
Now   suppose this gap $T/2 - M <  ck$; then solving $\biasdetect_1$ is at most as hard as distinguishing $b$ and $b\oplus e_S$.
Hence,  if we  choose $c'$ such that  $ M = c'T > T/2 - ck$, then the protocol $\Pi$ fails  with probability greater than $\delta$. This is a contradiction. 

Thus, for any $b$ such that $ck = c\cdot | \{ i : b_i = 0\}|>T/2 - M$,  
\begin{equation*}\label{eq:true-Tp}
\sum_{i: b_i = 0} d_{\TV} (\Pi_{b \oplus e_i}, \Pi_b) \geq\Omega \left( \frac{(1-c)T}{2} \right). 
\end{equation*}
From \eqref{eqn:final-line} and Jensen's inequality, 
\begin{align*}
      I(\Pi; C_1,\cdots, C_T)  \geq \Omega \left((1-c)^2 T\right).
\end{align*}
This finishes the proof.
\end{proof}

\end{document}